\documentclass[envcountsect]{llncs}
\usepackage{pgf,tikz}
\usetikzlibrary{trees}
\usepackage{latexsym,amssymb}
\usepackage{graphics}
\usepackage{enumerate}
\usepackage{verbatim}
\usepackage{float}

\setlength{\oddsidemargin}{0pt}
\setlength{\evensidemargin}{0pt}
\setlength{\textwidth}{300pt}

\usepackage{a4wide}

\title{Characterizing Successful Formulas: the Multi-agent Case}
\author{Sanchit Saraf \and Sumit Sourabh}
\institute{Institute for Logic, Language and Computation, Universiteit van Amsterdam,\\
P.O. Box 94242, 1090 GE Amsterdam, The Netherlands.\\
\email{\{s.saraf, s.sourabh\}@uva.nl}
}
\begin{document}
\maketitle   
\begin{abstract}
Characterization of successful formulas in Public Announcement Logic (PAL) is a well known open problem in Dynamic Epistemic Logic. Recently, Holliday and ICard have given a complete characterization for the single agent case  in \cite{holliday}. However, the problem for the multi-agent case is open. This paper gives a partial solution to the problem, characterizing the subclass of the language consisting of unary operators, and discusses methods to give a complete solution.
\end{abstract}

\newenvironment{replemma}[1]{\noindent {\bf Lemma~#1.}}{\smallskip}

\section{Introduction}
The logic of Public Announcements is the simplest form of $S5$ dynamic epistemic
logic, augmenting standard epistemic logic with public announcement operator. It was formulated and axiomatised without the common knowledge operator by Plaza in \cite{plaza}. The axiomatisation of Public announcement logic (PAL) with the common knowledge operator was given by Baltag, Moss and Solecki \cite{BaltagMS98}. In the same paper, the authors show that PAL is not strongly complete because of infinitary nature of the common knowledge operator. For a detailed account on PAL, one can refer to  \cite{DEL}.

 The notion of a successful update was given by Gebrandy \cite{gebrandy} and van Ditmarsch \cite{ditmarschphd}. The formulas which remain true after being announced are called successful formulas. An interesting open problem in PAL is the syntactic characterization of
successful formulas \cite{benthemld,BenthemOpen,benthemone,gebrandy,DEL,secret,BaltagvDM07}. A classic example of a formula which is not successful is the Moore Sentence $p \land \neg Kp$  \cite{moore}, which can be read as ``$p$ is true but you do not know $p$". The Moore sentences have been analysed by Hintikka in his classical monograph \cite{Hintikka}. Their relevance has been extensively studied by van Ditmarsch and Kooi in their paper \cite{secret}. Successful formulas have important applications in many security protocols.  Together with its practical usefulness, the task of characterizing successful formulas independently presents itself as an interesting mathematical problem.

The aim of this paper is to present a partial solution to the problem for the multi-agent case. Other solutions have been proposed, most notably by Holliday and Icard in their recent paper \cite{holliday}, where they completely solve the problem for the single agent case. In \cite{holliday}, it is also shown that for a single agent, the source of failure is Moorean in nature, which implies that unsuccessful formulas contain at least one binary operator. In contrast, for the multi-agent case, formulas with only unary operators can also be unsuccessful. The simplest
example of such a formula is $K_aL_b p$. The full syntactic characterization of successful formulas in the multi-agent case is a difficult task and not a simple generalization of the single agent case.

 We give a characterization for the successful formulas in the fragment $\mathcal{L}_\mathrm{sterm}$ (for multiple agents) which we call single term formulas in our notation. The formulas in $\mathcal{L}_\mathrm{sterm}$ are terms without any binary connectives and are inductively defined as 
\[ \phi := p \;| \neg \phi \;| K_i\phi \]
where $i\in I$ is  the set of agents, $p \in \mathsf{Prop} $ is the the set of
propositional letters and $K_i \phi$ is interpreted as agent $i$ knows $\phi$. We further classify single term formulas into simple single term and compound single term formulas to distinguish between single or multiple occurrences of an epistemic operator $E_i$ corresponding to an agent $i$. We give a few examples to motivate why we need separate analysis for the compound single term formulas. We have also considered the fragment $\mathcal{L}_\mathrm{mterm}$, where we allow for binary connectives. We present some preliminary results on characterization of conjunctions of single term formulas. We also have some general results which connect the class of successful formulas with other known classes of formulas, such as self refuting formulas \cite{holliday}. Our work is relevant as it provide insights to the nature and complexity of the problem in hand. The full characterization for the multi-agent case is still open and we briefly discuss possible ways to go about for solving the problem.

The organization of the paper is as follows: In Section 2 we state the
preliminaries and previous work. We present our characterization results on single term formulas $\mathcal{L}_\mathrm{sterm}$ in Section 3. In Section 4, we consider the multiple term language $\mathcal{L}_\mathrm{mterm}$ and present the characterization results for conjunctions of $\mathcal{L}_\mathrm{sterm}$ formulas. In Section 5, we present results which shed more light on the properties of
successful formulas and their connections with other known classes of formulas. We conclude the paper in Section 5, discussing a possible
approach to solve the general problem.
\section{Preliminaries and Previous Work}
In this section, we present the syntax and semantics of Public Announcement logic as given in \cite{DEL}. We also define successful formulas and list the existing results on characterization of successful formulas. 

The syntax of PAL ($\mathcal{L}_\mathrm{PAL}$) is given as follows:
\[\phi :=  \  p \;|\neg \phi\;|\; \phi \land \phi\; |\; \phi \lor \phi\; |\; K_i\phi\; |\; C\phi \;|\; [\phi]\;\psi \]
where $p\in \mathsf{Prop}$ is the set of propositional letters, $K_i \phi $ is interpreted as agent $i$ `knows' $\phi$ and $C\phi$ is interpreted as, it is common knowledge that $\phi$. We use the notation $[\phi]\psi$ for saying that $\psi$ is true after $\phi$ is announced.

An epistemic model is given by the triple $(W,R_i,V)$ where $W$ is the set of worlds and $R_i \subseteq W\times W$ is the accessibility relation  between the worlds for each agent $i \in I$, with $I$ being the index set of agents. The map $V: \mathsf{Prop}\rightarrow \mathcal{P}(W)$ is the valuation function specifying which propositional letters are true at a world $w\in W$. Since we restrict ourselves to the $S5$ case, we can assume that the relations $R_i$, for all agents $i\in I$ are equivalence relations. We use $R_I$ to denote the reflexive, transitive closure of union of all the relations $R_i$, $R_I = (\bigcup_{i\in I} R_i )^*$. Given a valuation $V$, we define the truth of a formula $\phi$ in a world $w$, denoted by $M, w\models \phi$ inductively below,
\begin{center}

\begin{tabular}{lcl}
$M, w \models  p$ & iff & $\;\;w \in V(p)$ \\
$M, w  \models \neg\phi $ & iff & $\;\;M, w\nvDash \phi$ \\
$M, w  \models \phi\land\psi$ & iff & $\;\;M, w  \models \phi$ and $M, w  \models \psi$ \\
$M, w  \models \phi\lor\psi$ & iff & $\;\;M, w  \models \phi$ or $M, w  \models \psi$ \\
$M, w  \models K_i \phi $ & iff & $\;\;\forall t \in W\;\mbox{s.t.}\; w R_i t \Rightarrow M, t  \models \phi$ \\
$M, w  \models C\phi$ & iff & $\;\;\forall t \in W \;\mbox{s.t.}\; w R_I t \Rightarrow M, t \models \phi$\\
$M, w  \models [\phi]\psi$ & iff & $\;\;M, w  \models \phi \Rightarrow M|_{\phi}, w  \models \psi$\\
\end{tabular}
\end{center}
where $M|_{\phi}= (W', R'_i, V')$ is the restriction of the model to the worlds where $\phi$ is true, and is defined as $W'=\{w\in W \;|\; M, w\models \phi\}$, $R'_i = R_i \cap (W'\times W')$ and $V'(p)=V(p)\cap W'$. 

We use $L_i\phi$ to denote the dual of $K_i\phi$, that is, $L_i\phi = \neg K_i \neg \phi $ and it is interpreted as agent $i$ considers it possible that $\phi$.

\begin{definition}[\textbf {Successful formulas}]
A formula $\phi$ is successful in PAL in {\bf S5} iff $[\phi]\phi$ is valid.
In other words, $M,w \models \phi$ implies $M|_\phi,w \models \phi$. 
\end{definition}
\begin{example}
The Moore sentence $p\land \neg Kp$ is a familiar example of an unsuccessful formula. We have the following model to illustrate why it is unsuccessful. Suppose we have two agents Ann and Bob. There is a butterfly on Bob's head but he doesn't know it, although Ann can see the butterfly. Let $p$ denote the sentence ``There is a butterfly on Bob's head" which is true at the actual world $w_2$. Since Bob does not know if there is a butterfly on his head, he cannot distinguish between the worlds $w_1$ and $w_2$. The models below represent the epistemic situation before and after the announcement.

\begin{figure}
\begin{center}
\tikzstyle{point}=[circle,draw]
\tikzstyle{c}=[circle,draw=red,minimum size=20pt,inner sep=0pt, ultra thick]
\tikzstyle{p}=[circle,draw,minimum size=10pt,inner sep=0pt]
\tikzstyle{ps}=[circle,draw,inner sep=1pt]
\tikzstyle{edge}=[style=-latex]
\begin{tikzpicture}[scale=0.5]
\path[rounded corners, draw=black] (-1.7,-0.6) rectangle (4.7,3.8);
  \node[ps] (n1) at (-0.5,1) [point] {$w_{1}$};
    \node[ps] (n4) at (3.5,1) [point] {$\underline{w_{2}}$} ;
  \node [below] at (n4.south) {$p$};
  \draw (n1) edge [edge,loop] node [above] {$R_a, R_{b}$} (n1);
  \draw (n1) edge node [above] {$R_{b}$} (n4);
  \draw (n4) edge [edge,loop] node [above] {$R_a, R_{b}$} (n4);
  \end{tikzpicture}
\hspace{15mm}
\begin{tikzpicture}[scale=0.5]
\path[rounded corners, draw=black] (-0.05,-0.6) rectangle (2.8,3.8);
 \node[ps] (n2) at (1.5,1) [point] {$\underline{w_{2}}$} ;
  \node [below] at (n2.south) {$p$};
  \draw (n2) edge [edge, loop] node [above] {$R_a, R_{b}$} (n2);
\end{tikzpicture}
\caption {Models before and after the announcement of $p \land \neg K_b p$}
\end{center}
\end{figure}
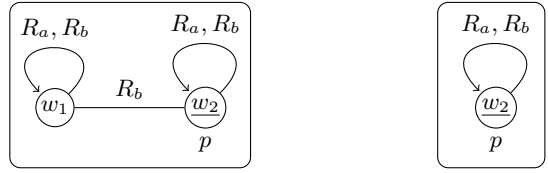
Before the announcement, $M, w_2\models p\land \neg K_b p$. After Ann makes the announcement, ``There is a butterfly on your head and you don't know it", the model changes to the one on the right, where Bob now knows that he has a butterfly on his head. The formula $p \land K_b p$ which is announced, is no longer true in the model on the right at $w_2$, and is therefore unsuccessful. 

\end{example}
The following result by van Benthem et al. \cite {ANJ,benthem} gives an immediate
subclass of formulas which are successful.
\begin{theorem}\label{succlass}
A formula is preserved under sub-models (of all relational models) iff it is
equivalent (in K) to a universal formula.
\end{theorem}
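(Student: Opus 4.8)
The plan is to prove Theorem~\ref{succlass}, which states that a formula is preserved under sub-models (of all relational models) if and only if it is equivalent in $K$ to a universal formula. Since this is a preservation-theorem in the style of classical model theory (analogous to the {\L}o\'s--Tarski theorem for substructures in first-order logic), I would split the proof into the two directions, the easy semantic direction and the harder syntactic-normal-form direction.

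\medskip\noindent\textbf{The easy direction.} First I would show that every universal formula is preserved under taking sub-models. The natural approach is induction on the structure of universal formulas. A universal formula is built from literals using conjunction, disjunction, and the box (knowledge) modalities $K_i$ (and their iterations), but without existential/diamond operators in positive position. The base case (propositional letters and their negations) is immediate, since the valuation $V'$ on a sub-model is just the restriction $V(p)\cap W'$. The inductive steps for $\land$ and $\lor$ are routine. The key step is the modal case $K_i\phi$: if $M,w\models K_i\phi$ and $M'=M|_{W'}$ is a sub-model with $w\in W'$, then every $R'_i$-successor $t$ of $w$ in $M'$ is also an $R_i$-successor in $M$ (because $R'_i=R_i\cap(W'\times W')$), so $M,t\models\phi$, and by the induction hypothesis $M',t\models\phi$; hence $M',w\models K_i\phi$. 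This direction is essentially bookkeeping.

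\medskip\noindent\textbf{The hard direction.} The converse — that any formula preserved under sub-models is equivalent in $K$ to a universal formula — is where the real work lies, and I expect this to be the main obstacle. The standard route is a compactness-and-saturation argument. I would first fix a formula $\phi$ preserved under sub-models and consider the set $U(\phi)$ of all universal consequences of $\phi$ in $K$. The goal is to show that $U(\phi)$ entails $\phi$, for then by compactness a finite conjunction of universal formulas (itself universal) is equivalent to $\phi$. To show $U(\phi)\models\phi$, I would take an arbitrary model $M,w\models U(\phi)$ and aim to build (or find) a model $N,v\models\phi$ together with a submodel-style relationship between $M$ and $N$ that, via the assumed preservation, transfers $\phi$ from $N$ back to $M$. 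Concretely, one shows the set consisting of $\phi$ together with the ``existential diagram'' of $(M,w)$ is finitely satisfiable, hence satisfiable by compactness, yielding the required $N$; the preservation hypothesis then forces $M,w\models\phi$.

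\medskip The delicate point — and the reason this is genuinely the hard part — is getting the notion of ``sub-model'' and the associated diagram to interact correctly with the modal language: one must ensure that the model $N$ constructed from the existential data actually embeds $M$ as a sub-model in the precise sense that box-formulas are preserved downward, which typically requires passing to $\omega$-saturated elementary extensions so that existential type-realization can be carried out. An alternative and perhaps cleaner strategy, given that the statement is phrased ``equivalent in $K$ to a universal formula,'' would be to cite the standard van Benthem-style preservation machinery directly and specialise it: the result is a modal analogue of the {\L}o\'s--Tarski theorem, so I would invoke the general submodel-preservation theorem for modal logic. Since the theorem is attributed to van Benthem et al.\ in the references, I expect the author simply appeals to this known result rather than reproving it, so my proposal concludes by noting that the forward (syntactic) direction is exactly the content of that cited preservation theorem, with the backward (semantic) direction supplied by the short induction sketched above.
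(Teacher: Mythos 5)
The paper does not actually prove this statement: Theorem~\ref{succlass} is imported as a known result of van Benthem et al.\ (cited, with no proof given) and is used purely as a black box to conclude that the fragment $\mathcal{L}_{suc}$ is successful. Your proposal correctly anticipates exactly this in its final paragraph, and your sketch --- the routine induction showing universal formulas are preserved under sub-models, plus the {\L}o\'s--Tarski-style compactness/saturation argument for the converse --- is the standard proof of the cited theorem, so it is consistent with (indeed, supplies more detail than) what the paper provides.
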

A universal formula in {\bf S5} is any formula which can be constructed
by $p,\land,\lor$ and $K$. This proves that the following sub-fragment of PAL is
successful, which we refer to as $\mathcal{L}_{suc}$
\[ \phi := p \;|\; \neg \phi \;|\; K_i \phi \;|\;\phi \land \phi \;| \;\phi \lor \phi \;|\; [\neg \phi] \phi \]
Other than this, \cite{DEL} also lists individual formulas, for instance $\neg
K_ap$, which are successful. The complexity of judging a formula to
be successful for multiple agents is shown to be PSPACE-complete in \cite{holliday}. 
In the same paper, a complete characterization for successful and
self-refuting formulas for S5 dynamic epistemic logic is also proposed for the single agent case. The
authors identify the source of all unsuccessful formulas as being a
\emph{Moorean-sentence}, and all self-refuting formulas as being a
\emph{Moore-sentence}. In addition, they define a \emph{super-successful
formula} as below.
\begin{definition}[\textbf {Super-successful formulas}]
A formula $\phi$ is super-successful iff given any M, for all $M^{'}$ such that
$M_\phi \subseteq M^{'} \subseteq M$  if $M,w \models \phi$ then $M^{'},w \models \phi$.
\end{definition}
It was shown in \cite{holliday} that super-successful formulas are closed under disjunction, but in general,
successful formulas are not, which is an important result.
\section{Characterization of $\mathcal{L}_\mathrm{sterm}$}
 We consider the subclass $\mathcal{L}_\mathrm{sterm}$ of \emph{single term formulas} in PAL.  The formulas in $\mathcal{L}_\mathrm{sterm}$ are inductively defined as 
\[ \phi := p \;| \neg \phi \;| K_i\phi \]
 Our choice of the subclass is appropriate in the
sense that it comprises of a basic language which can be used as a building
block for the complete $\mathcal{L}_\mathrm{PAL}$. 

 We use an operator variable $E_i$ to
stand for $K_i$ or $L_i$ in the description of formula forms. A formula involving operators with numeric subscripts has operators of only one type. For instance, $K_1\ldots K_n \phi $ denotes that there are exactly $n$, $K$ operators and no $L$ operator.
 We will work with formulas in negation normal form. We use $\alpha, \beta$ for denoting propositional formulas (without any epistemic operators) and $\phi, \psi$  are used to denote the formulas with epistemic operators.
\begin{definition}[\textbf{Single term formula}] A formula in negation normal form is
single term, if it is of the form $E_1 \ldots E_n\alpha$, where $E_i$ is
either $K_i$ or $L_i$ and $\alpha$ is a propositional formula.
\end{definition}

We now present the characterization results after the above mentioned notations. It is easy to see that any single term formula $E_1 \ldots E_n\alpha$ where $\alpha$ is a contradiction or a tautology is a successful formula.
\subsection{Simple single term formulas}
We first give a characterization for a simplified form of the single termed formulas.
\begin{definition}[\textbf{Simple single term formula}] A single term formula $E_1 \ldots E_n\alpha$ is said to be simple if for any agent $i\in I$, where $I$ is the index set for the set of agents, $E_i$ occurs at most once in $E_1 \ldots E_n.$ 
 \end{definition}
We further classify the simple single term formulas into $K$\emph {-simple single term} formulas which have only $K_i$ as the epistemic operators and \emph{L-simple single term} formulas which have only $L_i$ as the epistemic operators. The characterization of successful formulas is easy to see in both these cases and is given by the following proposition.
\begin{proposition}
\label{llsimple}
 K-simple single term and L-simple single term formulas are successful.
\end{proposition}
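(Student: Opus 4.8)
The plan is to split the proof into its two sub-cases and treat them by genuinely different arguments, since $K_i$ is universal (downward-looking) whereas $L_i$ is existential. For the $K$-simple case I would use preservation under sub-models, and for the $L$-simple case a direct witness-chain argument exploiting reflexivity of the $S5$ relations. In both cases the key observation is that $M|_\phi$ is itself a sub-model of $M$ whose domain is exactly $W' = \{w : M,w\models\phi\}$, so successfulness amounts to showing that the relevant worlds and relational links survive the announcement.

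For the \textbf{$K$-simple} case, let $\phi = K_1\ldots K_n\alpha$ with $\alpha$ propositional. I would observe that $\phi$ is a universal formula in the sense of Theorem~\ref{succlass}: it is built from the literals of $\alpha$ using only $\land$, $\lor$ and the operators $K_i$. By Theorem~\ref{succlass} it is therefore preserved under sub-models, and since $w$ lies in the domain of the sub-model $M|_\phi$ whenever $M,w\models\phi$, preservation immediately gives $M|_\phi,w\models\phi$. If one prefers not to invoke the syntactic shape, the same conclusion follows from a one-line induction: a propositional $\alpha$ is preserved when passing to any sub-model retaining the world, as the valuation is unchanged; and if $\psi$ is preserved then so is $K_i\psi$, because $R'_i = R_i\cap(W'\times W')\subseteq R_i$ means every surviving $R_i$-successor in $M|_\phi$ was already an $R_i$-successor in $M$ satisfying $\psi$.

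For the \textbf{$L$-simple} case, $\phi = L_1\ldots L_n\alpha$ is existential and is \emph{not} preserved under sub-models, so a direct semantic argument is required. Suppose $M,w\models\phi$. Unfolding the diamonds yields a witnessing chain $w = t_0\,R_1\,t_1\,R_2\cdots R_n\,t_n$ with $M,t_n\models\alpha$. The crucial step is to show that every world on this chain already satisfies $\phi$, so that the whole chain is retained by the announcement. Fixing $t_k$, I would use reflexivity of each relation (we are in $S5$) to prepend $k$ reflexive loops $t_k\,R_1\,t_k\,R_2\cdots R_k\,t_k$ and then continue along the tail $t_k\,R_{k+1}\,t_{k+1}\cdots R_n\,t_n$; this compound chain witnesses $M,t_k\models L_1\ldots L_n\alpha=\phi$. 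Hence $t_0,\ldots,t_n\in W'$. Since both endpoints of each link $t_{k-1}\,R_k\,t_k$ survive, $R'_k=R_k\cap(W'\times W')$ still contains that link, so the entire chain is preserved in $M|_\phi$; and as $\alpha$ is propositional and $t_n$ survives, $M|_\phi,t_n\models\alpha$. The preserved chain then witnesses $M|_\phi,w\models\phi$.

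The single genuinely non-routine point, and the one I expect to be the main obstacle, is exactly this reflexive-prefix construction in the $L$ case: it is what forces every intermediate witness $t_k$ to validate the \emph{full} formula $\phi$ and hence to be retained by the announcement, and it is precisely where the $S5$ assumption is indispensable—over frames lacking reflexivity both the argument and the result would fail. Note that neither argument actually appeals to the simplicity hypothesis (each agent occurring at most once), so the reasoning in fact covers all $K$-only and $L$-only single term formulas; I would remark on this, since the more delicate behaviour that makes simplicity relevant appears only once the two operator types are mixed.
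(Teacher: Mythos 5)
Your proof is correct and follows essentially the same route as the paper's: the $K$-simple case is discharged by Theorem~\ref{succlass} (preservation of universal formulas under sub-models), and the $L$-simple case by exhibiting the witnessing chain and using reflexivity to show every world on it satisfies the full formula, so the chain survives the announcement. Your reflexive-prefix construction at each $t_k$ just spells out explicitly the step the paper compresses into ``the frame is reflexive so $L_1\ldots L_n\alpha$ is true at all the worlds in the chain,'' and your closing remark that simplicity is never used anticipates the paper's own later extension to $K$-compound and $L$-compound formulas.
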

\begin{proof}
The $K$-simple single term formulas with only $K_i$ operators can be seen as a subclass of $\mathcal{L}_{suc}$ formulas, which we know are successful from Theorem \ref{succlass}. In case of $L$-simple single term formulas, suppose $M,w_1\models L_1 L_2\ldots L_n\alpha$. It gives us a chain of related worlds $w_1, \ldots, w_{n+1}$ such that $w_1 R_1 w_2 R_2 w_3\ldots w_n R_n w_{n+1}$ and $M,w_{n+1}\models\alpha$. We know that the frame is reflexive so $ L_1 L_2\ldots L_n\alpha$ is true at all the worlds in the chain, and no world is deleted after the public announcement of  $L_1 L_2\ldots L_n\alpha$. As a result, $M|_{ L_1 L_2\ldots L_n\alpha}, w_1\models  L_1 L_2\ldots L_n\alpha$ showing that $L$-simple single term formulas are successful.   
\end{proof}
We next define formulas which have both $L$ and $K$ epistemic operators and present characterization results for them. 
\begin{definition}[\textbf{$KL$-simple single term formula}] 
An KL-simple single term formula is a formula in which there exists at least one L operator in the scope of a K operator.
\end{definition}
\begin{example}As a simple example, the formula $K_1 L_2 L_3 K_4 L_5 \alpha$ is KL-simple single term, since the operator $L_2$ is in the scope of $K_1$. The formula $L_1 L_2 K_2 $ is not $KL$-simple single term, since there is no $L$ operator in the scope of the $K_2$ operator. We would like to stress the fact that the order of $K$ and $L$ operators in the formula does not make a difference as long as there is an $L$ operator in the scope of some $K$ operator.
\end{example}
\begin{proposition}\label{klsimple}
KL-simple single term formulas are unsuccessful.
\end{proposition}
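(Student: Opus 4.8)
The plan is to produce, for any KL-simple single term $\phi$, a single pointed model witnessing $M,w_0\models\phi$ while $M|_\phi,w_0\not\models\phi$; since unsuccessfulness only asks for one such witness, a direct construction suffices. Throughout I assume $\alpha$ is contingent (neither a tautology nor a contradiction), as the remark preceding the proposition already disposes of the degenerate cases.

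First I would reduce to a convenient shape. Writing $\phi=E_1\cdots E_n\alpha$, the KL-simple hypothesis gives indices $j<k$ with $E_j$ a $K$-operator and $E_k$ an $L$-operator. Taking $m$ to be the largest index below $k$ carrying a $K$-operator, maximality forces $E_{m+1}$ to be an $L$-operator, so $\phi$ contains an adjacent pair $K_aL_b$ at positions $m,m+1$. Thus $\phi=\chi\,K_aL_b\,\psi$ with prefix $\chi=E_1\cdots E_{m-1}$ and suffix $\psi=E_{m+2}\cdots E_n\alpha$, and, crucially, simplicity guarantees that the agents $a,b$ occur nowhere else, so the relations $R_a,R_b$ are each used at a single place.

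The model generalizes the three-world refutation of $K_aL_bp$. I would build a spine witnessing $\phi$ at the root $w_0$: advance to a fresh world at each $L$-operator and stay put (using a singleton, hence vacuous, $K$-class) at each $K$-operator, so that the spine ends at a world $u$ where $K_aL_b\psi$ is to be evaluated. At $u$ I make the $R_a$-class the singleton $\{u\}$ and add an edge $u\,R_b\,s$ to a witness world $s$ from which a further spine witnesses $\psi$; then I attach the Moorean gadget, an extra world $s'$ forming the class $\{s,s'\}$ under $R_a$, chosen so that $s'$ falsifies $L_b\psi$ (it suffices to make $\alpha$, equivalently $\psi$, false at $s'$, since $b$ does not reach $s'$). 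Reflexivity of $S5$ then makes $\phi$ true at every world of the spine, so the whole spine, including $w_0$ and $u$, survives the announcement; because $a,b$ occur only once, $s$ and $s'$ are the only worlds at which $K_aL_b\psi$, and hence $\phi$, fails, so they (and whatever $\psi$-witnesses hang below $s$) are exactly what gets deleted. After deletion, the only surviving $R_b$-successor of $u$ is $u$ itself, at which $\psi$ is false, so $L_b\psi$ fails at $u$, $K_aL_b\psi$ fails at $u$, and, since the surviving spine still routes $w_0$ down to $u$ with no alternative witness available, $\phi$ fails at $w_0$ in $M|_\phi$.

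The main obstacle is bookkeeping rather than any single hard idea: I must fix the valuation of $\alpha$ along the spine (false on the main spine, true at the deepest $\psi$-witness, false at $s'$) and verify by induction down the operator string that (i) every spine world genuinely satisfies $\phi$ and so is preserved, yet (ii) after the deletion of $s$ the failure propagates back up the prefix to $w_0$ with no surviving world supplying a new existential witness. Both verifications rest on the same point, which is where simplicity is indispensable: each accessibility relation is exercised at exactly one step of the spine, so every existential operator has a unique non-reflexive witness and every universal operator an essentially singleton class, which simultaneously confines the failure to the $K_aL_b$ gadget and prevents the announcement from leaving behind an alternative route that would keep $\phi$ true.
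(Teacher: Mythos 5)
Your proposal is correct, but it takes a different route from the paper's proof. The paper fixes a single counter-model once and for all: the four-world chain $w_1 R_1 w_2 R_2 w_3 R_1 w_4$ with $\alpha$ true exactly at $w_1,w_3$; it checks that this refutes $K_1L_2\alpha$, and then observes that for an arbitrary KL-simple formula, written $K_1XL_2Y\alpha$ (first $K$, first $L$), the \emph{same} model works: by simplicity, every agent occurring in $X$ or $Y$ has the identity relation in that model, so reflexivity makes those operators semantically transparent and the formula evaluates exactly as $K_1L_2\alpha$ does. You instead decompose around an \emph{adjacent} pair $K_aL_b$ (the last $K$ preceding some $L$) and build a formula-dependent model: a witness spine for the prefix, the Moorean gadget $u\,R_b\,s$, $s\,R_a\,s'$, and a further witness spine for the suffix $\psi$. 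Both arguments turn on the same gadget; the difference is that the paper makes all remaining operators vacuous in one fixed model, while you give them explicit witnesses, at the cost of the inductive bookkeeping you acknowledge (truth of $\phi$ at every spine world, and propagation of the failure back up to $w_0$ after the update) --- this does go through, precisely because each relation is the identity outside the single edge it serves. Your route buys uniform coverage of KL-simple formulas that begin with $L$-operators, a case the paper's displayed decomposition $K_1XL_2Y\alpha$ silently omits (though the paper's model handles it too), and it isolates exactly where simplicity is needed; the paper's route buys a much shorter verification on one three-edge model instead of a per-formula construction. One small inaccuracy in your write-up: it is not true that $s$, $s'$ and the $\psi$-witnesses below $s$ are ``exactly what gets deleted'' --- for $K_aL_bL_cp$, for instance, the deepest witness $t$ satisfies $\phi$ and survives the announcement --- but this is harmless, since once $s$ is removed such witnesses are disconnected from the spine and cannot supply an alternative witness for $L_b$ at $u$.
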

\begin{proof}
In order to prove that $KL$-simple single termed formulas are unsuccessful, we first observe that for the simple case of 2 agents, the formula $K_1L_2\alpha$ is unsuccessful.  Consider the following counter-model (for the sake of clarity, we omit the reflexive arrows for each of the agents at every world ) 
\begin{figure}
\begin{center}
\tikzstyle{point}=[circle,draw]
\tikzstyle{c}=[circle,draw=red,minimum size=20pt,inner sep=0pt, ultra thick]
\tikzstyle{p}=[circle,draw,minimum size=10pt,inner sep=0pt]
\tikzstyle{ps}=[circle,draw,inner sep=1pt]
\tikzstyle{edge}=[style=-latex]
\begin{tikzpicture}[scale=0.5]
\path[rounded corners, draw=black] (-1.2,-0.1) rectangle (6.2,2.2);
  \node[ps] (n1) at (-0.5,1) [point] {$w_{1}$};
  \node[ps] (n2) at (1.5,1) [point] {$w_{2}$} ;
  \node[ps] (n3) at (3.5,1) [point] {$w_{3}$} ;
  \node[ps] (n4) at (5.5,1) [point] {$w_{4}$} ;
  \node [below] at (n1.south) {$p$};
  \node [below] at (n3.south) {$p$};
  \draw (n1) edge node [above] {$R_{1}$} (n2);
  \draw (n2) edge node [above] {$R_{2}$} (n3);
  \draw (n3) edge  node [above] {$R_{1}$} (n4);
  \end{tikzpicture}
\hspace{10mm}
\begin{tikzpicture}[scale=0.5]
\path[rounded corners, draw=black] (-1.2,-0.1) rectangle (2.3,2.2);
  \node[ps] (n1) at (-0.5,1) [point] {$w_{1}$};
  \node[ps] (n2) at (1.5,1) [point] {$w_{2}$} ;
  \node [below] at (n1.south) {$p$};
  \draw (n1) edge node [above] {$R_{1}$} (n2);
  \end{tikzpicture}
\caption {Models before and after the announcement of $K_1 L_2 p$}
\end{center}
\end{figure}
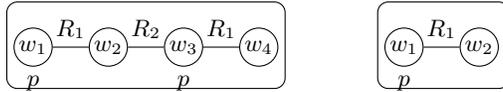

We need to show that $M, w_{1} \models K_1 L_2 p $ and $M |_{ K_1 L_2 p}, w_1 \models \neg K_1 L_2 p $. It can be easily checked that the formula $K_1 L_2 p$ is true at the worlds $w_1$ and $w_2$ but false at $w_3$ and $w_4$. Therefore after the public announcement of the formula $K_1 L_2 p $, the worlds where the formula is not true get deleted and we get the model on the right. In  the updated model $M |_{ K_1 L_2 p}$, the formula does not hold at $w_1$, that is, $M |_{ K_1 L_2 p}, w_1 \models \neg K_1 L_2 p $ which shows that $K_1 L_2 p$ is unsuccessful.

The above argument for the simple case easily generalizes to any $KL$-simple single term formula. Consider the  $KL$-simple single term formula $K_1\ldots L_2\ldots\alpha$ where $K_1$ is the first $K$ operator and $L_2$ is the first $L$ operator in the formula. We use the  notation $K_1 X L_2 Y \alpha$ for the formula $K_1\ldots L_2\ldots\alpha$, where $X$ and $Y$ are series of $K$ and $K, L$ epistemic operators respectively, in any arbitrary order. In order to show that $K_1\ldots L_2\ldots\alpha$ is unsuccessful, it suffices to use the same counter-model that we have above for the formula $K_1 L_2\alpha$. The reflexivity of the frame makes the formula $ K_1 X L_2 X\alpha $ true at the world $w_1$ and $w_2$ but false at $w_3$  and $w_4$, irrespective of the form of $X$ and $Y$. As a result after the announcement of the formula $K_1 X L_2 Y\alpha$, the model reduces to the one on the right where we have $M |_{ K_1 X L_2 Y \alpha}, w_1 \models \neg K_1X L_2 Y\alpha $, proving that $K_1 X L_2 Y\alpha$ is unsuccessful .
\end{proof}
\begin{definition}[\textbf{$LK$-simple single term formula}] 
An $LK$-simple single term formula is one which begins with $L$ operators and does not have any $L$ operator in the scope of a $K$ operator.
\end{definition}
\begin{example}
The simplest $LK$ simple single term formula would be $L_1 K_2 p$ with two epistemic operators. A more extensive example would be the formula $L_1\ldots L_m K_{m+1}\ldots K_n\alpha$. It is easy to observe that in general, any $LK$ simple single term formula will have  a series of $L$ operators followed by a series of $K$ operators, because of the restriction that we cannot have an $L$ operator in the scope of a $K$ operator. 
\end{example}
\begin{proposition}\label{lksucc}
$LK$-simple single term formulas are successful.
\end{proposition}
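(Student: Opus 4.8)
The plan is to exploit the rigid shape of an $LK$-simple single term formula: by the preceding example every such formula can be written as $\phi = L_1 \ldots L_m \psi$ where $\psi = K_{m+1} \ldots K_n \alpha$ is a pure-$K$ single term formula. The leading block of diamonds is the only potential source of failure under announcement, since the witness of an existential modality may be deleted when we pass to the submodel; the whole argument amounts to showing that in this particular configuration the witnesses always survive.

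First I would record two facts. Fact (i): the $K$-tail $\psi$ is a universal formula (literals, $\land$, $\lor$ and $K$ only), so by Theorem \ref{succlass} it is preserved under submodels, i.e. if $M, v \models \psi$ and $v$ survives the announcement of $\phi$ then $M|_\phi, v \models \psi$. This is exactly the reason the proof of Proposition \ref{llsimple} placed the $K$-only formulas inside $\mathcal{L}_{suc}$. Fact (ii): by reflexivity of each $R_i$, prepending diamonds to a true formula keeps it true, so $M, v \models \psi$ implies $M, v \models L_1 \ldots L_m \psi = \phi$, each $L_j$ being witnessed by the reflexive loop $v R_j v$.

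Next, assume $M, w \models \phi$. Unwinding the diamonds yields a chain $w = w_0 R_1 w_1 R_2 \cdots R_m w_m$ with $M, w_m \models \psi$. The key step is to show that every world $w_j$ on this chain survives the announcement, i.e. $M, w_j \models \phi$. Indeed, the tail of the chain gives $M, w_j \models L_{j+1} \ldots L_m \psi$, and padding on the left with the reflexive loops $w_j R_j w_j, \ldots, w_j R_1 w_j$ via Fact (ii) upgrades this to $M, w_j \models L_1 \ldots L_m \psi = \phi$. Hence $w_0, \ldots, w_m$ all lie in $W'$.

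Finally I would reassemble the witness inside $M|_\phi$. Since all the $w_j$ survive, each edge $w_{j} R_{j+1} w_{j+1}$ is retained in $M|_\phi$ because $R'_{j+1} = R_{j+1} \cap (W' \times W')$, and by Fact (i) we still have $M|_\phi, w_m \models \psi$. Walking this surviving chain back from $w_m$ to $w_0 = w$ re-establishes each diamond in turn, giving $M|_\phi, w \models L_1 \ldots L_m \psi = \phi$, so $\phi$ is successful. The only delicate point is the survival claim for the intermediate worlds; everything else is bookkeeping. I note moreover that distinctness of the agent indices is never used, so the argument in fact works for any single term formula whose diamonds all precede its boxes.
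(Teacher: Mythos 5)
Your proof is correct and takes essentially the same route as the paper's: unwind the leading diamonds into a chain $w_1 R_1 w_2 \cdots R_m w_{m+1}$, use reflexivity to pad each chain world back up to the full formula so that the entire chain survives the announcement, and conclude because the surviving chain together with the $K$-tail witnesses the formula in the submodel. If anything, you are more explicit than the paper, which leaves the submodel-preservation of the $K$-tail (your Fact (i)) implicit; your closing remark that distinctness of agents is never used is also accurate.
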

\begin{proof}
We first show that the $LK$-simple single term formula $L_1 K_2 \alpha$ is successful. Suppose, $M, w_1 \models L_1 K_2 p$ which implies $\exists \;w_2$, such that $w_1 R_1 w_2 $ and $M, w_2 \models K_2 p$. But since the frame is reflexive, we have $w_2 R_1 w_2$ and therefore  $M, w_2 \models L_1 K_2 p$. As a result, $w_2\in M|_{ L_1 K_2 p}$ which would make $ L_1 K_2 p$ true at $w_1$ in  $M|_{ L_1 K_2 p}$ since the relations are preserved under sub-models, thus proving that $ L_1 K_2 p$ is successful.  

   In order to prove that any $LK$-simple single term formula is successful, we use a similar argument as above.  Consider the formula  $L_1\ldots L_m K_{m+1}\ldots K_n\alpha$ which is true at a world $w_1$ in a model $M$. Since  $M, w_1 \models L_1\ldots L_m K_{m+1}\ldots K_n\alpha$, $\exists w_2$ such that $w_1 R_1 w_2$ and $M, w_2 \models L_2\ldots L_m K_{m+1}\ldots K_n\alpha$. We can repeat the same argument to get a chain of related worlds, $w_1R_1 w_2 R_2 w_3\ldots w_m R_m w_{m+1}$ such that $M, w_{m+1} \models K_{m+1}\ldots K_n\alpha$. Using reflexivity of $R_i$ for all $i\in I$ , we can show that all the worlds in the chain  $w_1R_1 w_2 R_2 w_3\ldots w_m R_m w_{m+1}$, will be present in the model after the announcement of the formula $L_1\ldots L_m K_{m+1}\ldots K_n\alpha$, since the formula is true in all the worlds connected to $w_1$ in the chain. Therefore, $M|_{ L_1\ldots L_m K_{m+1}\ldots K_n\alpha}, w_1\models L_1\ldots L_m K_{m+1}\ldots K_n\alpha$, proving that $LK$-simple single term formulas are successful.   
\end{proof}
A nice property of this class is that
all formulas which are successful are also super-successful. Thus, they will be
closed under disjunction. 
\subsection{Compound single term formulas }
In this section we present characterization results for compound single term formulas where we allow multiple occurrences of  an epistemic operator $E_i$ within a formula.
\begin{definition}[\textbf{Compound single term formulas}] A single termed formula $E_1 \ldots E_n\alpha$ is said to be compound, if there is at least an agent $i\in I$, where $I$ is the index set for the set of agents, such that $E_i$ occurs more than once in the formula. 
 \end{definition}
We generalize the definition of $K$ and $L$-simple single term formulas to the compound case, as $K$ and $L$-compound simple term formulas, by allowing multiple occurrences of $E_i$ for agents  $i\in I$.
\begin{proposition}K-compound single term and L-compound single term formulas are successful.
\end{proposition}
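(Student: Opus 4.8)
The plan is to note that neither half of Proposition~\ref{llsimple} actually used the hypothesis that each operator $E_i$ occurs at most once, so both arguments carry over to the compound case essentially verbatim; the only real work is to make explicit why repeated subscripts are harmless.

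For the $K$-compound case, I would argue exactly as for $K$-simple formulas. A formula $K_{i_1}\ldots K_{i_n}\alpha$ with $\alpha$ a propositional formula in negation normal form is a universal formula no matter how the subscripts $i_1,\ldots,i_n$ repeat, since the class of universal formulas is closed under prefixing any $K_i$. By Theorem~\ref{succlass} universal formulas are preserved under sub-models. Whenever $M,w\models\phi$, the updated model $M|_\phi$ is a sub-model of $M$ that still contains $w$, so preservation under sub-models yields $M|_\phi,w\models\phi$ at once, establishing success.

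For the $L$-compound case, suppose $M,w_1\models L_{i_1}\ldots L_{i_n}\alpha$ and unfold this into a chain $w_1 R_{i_1} w_2 R_{i_2}\cdots w_n R_{i_n} w_{n+1}$ with $M,w_{n+1}\models\alpha$. The crucial claim is that $L_{i_1}\ldots L_{i_n}\alpha$ holds at \emph{every} world $w_k$ of this chain. To witness it at $w_k$, I would satisfy the first $k-1$ operators by reflexive loops at $w_k$ and the remaining ones by following the tail $w_k R_{i_k}\cdots R_{i_n} w_{n+1}$ of the original chain, with $\alpha$ true at its endpoint $w_{n+1}$. This construction uses only reflexivity of each $R_i$, which holds in S5, and is completely insensitive to whether the subscripts are distinct. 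Consequently no world of the chain is removed by the announcement, the edges linking consecutive worlds persist in $M|_\phi$, and therefore $M|_\phi,w_1\models L_{i_1}\ldots L_{i_n}\alpha$.

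There is no genuine obstacle here beyond this bookkeeping: the substance of the proposition is simply the observation that the reflexivity-based success arguments for the simple case never appealed to distinctness of the agents, so permitting an agent's operator to recur changes nothing. The only point deserving care is spelling out the reflexive-loop-plus-tail witness for the $L$ case, to confirm that repeated indices cannot break the chain that keeps all relevant worlds in the updated model.
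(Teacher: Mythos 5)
Your proposal is correct and takes essentially the same route as the paper: the $K$-compound case is handled by observing these formulas are still universal (a subclass of $\mathcal{L}_{suc}$) and invoking Theorem~\ref{succlass}, and the $L$-compound case reuses the chain-plus-reflexivity argument of Proposition~\ref{llsimple}. The only cosmetic difference is that the paper first collapses adjacent repeated $L$ operators before appealing to the simple-case argument, whereas you run the chain argument directly with repeated indices and spell out the reflexive-loop-plus-tail witness, which is if anything a cleaner presentation of the same idea.
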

\begin{proof}
The proof for the K-compound single term easily follows from the fact that they form a subclass of $\mathcal{L}_{suc}$ which are successful. For $L$-compound single term formulas, the proof is identical as in the case of simple formulas. Since the frame is reflexive, $M, w\models L_1L_1\alpha$ implies $M,w\models L_1\alpha$, so any multiple occurrences of epistemic operators occurring together can be reduced to a single occurrence. In case of multiple occurrence of epistemic operators not occurring together, we can use the same argument as in the proof of proposition \ref{llsimple}.
\end{proof}
The definition of the $KL$-simple single term formulas can be generalized to the setting of compound formulas by allowing multiple occurrences of epistemic operators corresponding to an agent. Unlike the simple formula case, where we have a single characterization result for all the $KL$-simple single term formulas, Proposition \ref{klsimple} does not hold for $KL$-compound simple term formulas.   
While we don't have a complete characterization of the $KL$ and $LK$-compound single term formulas, we present a few examples to show that some of the results for the simple formulas do not generalize to the compound formulas, which motivates separate and more general characterization results.  The following proposition shows that the formula $K_1L_2 K_1 p$ is successful, which would otherwise have been classified as unsuccessful in the simple single term case. 
\begin{proposition}\label{klsuccmp}
The compound single term formula $K_1 L_2 K_1 \alpha$ is successful.
\end{proposition}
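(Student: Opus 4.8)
The plan is to show directly that $M,w\models K_1 L_2 K_1\alpha$ forces $M|_\phi,w\models K_1 L_2 K_1\alpha$, where $\phi = K_1 L_2 K_1\alpha$. The reason this compound formula behaves differently from the simple $K_1 L_2 p$ (which Proposition~\ref{klsimple} shows is unsuccessful) is that the inner block $K_1\alpha$ is strong enough, via the $\mathbf{S5}$ axioms, to guarantee that any world witnessing the existential $L_2$ also satisfies the whole formula $\phi$, and hence survives the announcement. This is exactly what fails in the simple case, where the $L_2$-witness need only satisfy $p$ and may be deleted.

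First I would fix the announcement: since $M,w\models\phi$, the world $w$ is retained, and because $\alpha$ is propositional its truth value is unaffected by passing to any submodel. Now take an arbitrary $v$ in $M|_\phi$ with $w R_1 v$; to establish $M|_\phi,w\models K_1 L_2 K_1\alpha$ it suffices to show $M|_\phi, v\models L_2 K_1\alpha$. Since $v$ lies in the submodel we have $M,v\models\phi$, and reflexivity of $R_1$ applied to the outermost $K_1$ yields $M,v\models L_2 K_1\alpha$; hence there is a world $u$ with $v R_2 u$ and $M,u\models K_1\alpha$.

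The crucial step is to verify that this witness $u$ is not deleted, i.e. that $M,u\models\phi$. Here I would invoke positive introspection: because $R_1$ is an equivalence relation, $M,u\models K_1\alpha$ propagates to every $R_1$-successor of $u$, so $M,u'\models K_1\alpha$ for all $u'$ with $u R_1 u'$; reflexivity of $R_2$ then gives $M,u'\models L_2 K_1\alpha$ for each such $u'$, which is precisely $M,u\models K_1 L_2 K_1\alpha$. Thus $u$ survives the announcement and the edge $v R_2 u$ is preserved in $M|_\phi$. Finally, $M,u\models K_1\alpha$ restricts to $M|_\phi,u\models K_1\alpha$, since deleting worlds can only remove $R_1$-successors of $u$ while the survivors still satisfy the propositional $\alpha$; so $u$ witnesses $M|_\phi, v\models L_2 K_1\alpha$, and as $v$ was arbitrary we conclude $M|_\phi, w\models K_1 L_2 K_1\alpha$. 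The main obstacle is precisely the preservation of the witness $u$, and the argument hinges on combining the transitivity and symmetry of $R_1$ (positive introspection) with the reflexivity of $R_2$; without the inner $K_1$ this step collapses, which is what distinguishes the compound case from the unsuccessful simple formula.
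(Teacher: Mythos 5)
Your proposal is correct and takes essentially the same approach as the paper's proof: both arguments hinge on showing that the $R_2$-witness of each surviving $R_1$-successor itself satisfies $K_1 L_2 K_1 \alpha$ (via transitivity of $R_1$ to propagate $K_1\alpha$, plus reflexivity of $R_2$), so that it survives the announcement and still serves as a witness in the restricted model. If anything, you are slightly more careful than the paper at the last step, where you explicitly note that $K_1\alpha$ is preserved under passage to the submodel, a point the paper leaves implicit.
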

 \begin{proof}
Suppose $M,w\models K_1 L_2 K_1 \alpha$ which implies that $\forall w_2$ such that $w_1 R_1 w_2, \; M, w_2\models L_2 K_1 p$.  We want to show that $M, w_2\models L_2 K_1 \alpha$. Consider an arbitrary $w'$ such that $w_2 R_1 w'$. Since $w_1 R_1 w_2$, by transitivity we have  $w_1 R_1 w'$ which makes $ L_2 K_1 \alpha$ true at $w'$, and therefore $M, w_2\models K_1L_2K_1 \alpha$. 

At $w_2$, we have $M, w_2\models L_2 K_1 \alpha$, which implies $\exists w_3$ s.t. $w_2R_2 w_3$ and $M, w_3\models K_1 \alpha$. Now using similar reasoning as above for $w_2$, using transitivity of $R_1$ we can show that $M, w_3\models K_1 L_2 K_1 \alpha$. Therefore, both $w_2$ and $w_3$ belong to the model $M|_{ K_1 L_2 K_1 \alpha}$, after the announcement of $K_1 L_2 K_1 \alpha$ which proves  $M|_{ K_1 L_2 K_1 \alpha}, w_1\models K_1L_2K_1\alpha$. 
\end{proof}
The generalization of above example to the case where we can have any number of epistemic operators and a characterization result for a sub-class of $KL$- compound formulas is quite involved and beyond the scope of this paper. Next, we have an example of the formula $K_1K_2 L_1 p$, which is unsuccessful as it would have been in the simple formula case, but the counter-model which we used earlier, doesn't suffice for this formula. This shows another deviation from the characterization in case of simple formulas. 
\begin{proposition}
The compound single term formula $K_1 K_2 L_1\alpha$ is unsuccessful. 
\end{proposition}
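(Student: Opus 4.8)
The plan is to establish unsuccessfulness in the same style as the earlier $KL$-cases, by exhibiting a single pointed $S5$-model $M,w_0$ with $M,w_0\models K_1K_2L_1\alpha$ while $M|_{K_1K_2L_1\alpha},w_0\models\neg K_1K_2L_1\alpha$; it suffices to take $\alpha=p$ and produce one such instance. First I would record precisely why the four-world chain used for $K_1L_2\alpha$ cannot be reused here, since the text asserts this without giving the reason. In that model the two agent-$1$ classes are $\{w_1,w_2\}$ and $\{w_3,w_4\}$, and each of them already contains a $p$-world; hence $L_1p$ holds at every world, so $K_1K_2L_1p$ holds globally and the announcement deletes nothing. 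This confirms that a genuinely richer model is required.

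The approach would be to build the model in three layers matched to the three operators: an agent-$1$ class $A$ containing the evaluation point $w_0$; an agent-$2$ edge from some $u\in A$ to a world $v$; and the agent-$1$ class $[v]_1$ that has to witness the inner $L_1p$. The Moorean idea is to make the $p$-witness $t\in[v]_1$ a world at which $K_1K_2L_1p$ fails, so that announcing the formula deletes $t$; then in the updated model $v$ would have no surviving agent-$1$-accessible $p$-world, $L_1p$ would fail at $v$, and $K_1K_2L_1p$ would fail back at $w_0$. I would then try to fix the remaining valuation and $R_2$-structure so that $\phi=K_1K_2L_1p$ is true at $w_0$ and at $v$ in $M$ but false at $t$, and finally verify the failure of $\phi$ at $w_0$ in the submodel.

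The step I expect to be decisive, and the main obstacle, is the interaction between the leading $K_1$ and the inner $L_1$, since both belong to agent~$1$. Under $S5$, positive and negative introspection force $K_1\psi$ to have a constant truth value on each agent-$1$ equivalence class, so $K_1K_2L_1p$ is constant on agent-$1$ classes; moreover, a witness for $L_1p$ at $v$ necessarily lies in $[v]_1$ itself. Consequently, deleting the witness $t$ because $\phi$ fails there would, by constancy, force $\phi$ to fail at $v$ as well, so $v$ would not survive to trigger the failure at $w_0$. Reconciling ``$t$ is removed while $v$ survives with its only witness gone'' is therefore the crux on which the whole argument turns, and carrying out this step amounts to deciding whether any $S5$-model can realize that configuration at all; if the naive template resists, the counter-model would have to route the failure through a different arrangement of the $R_2$-edges rather than through the witnessing $R_1$-step.
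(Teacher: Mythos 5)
Your attempt, as it stands, does not prove the statement: it sets up a template for a counter-model and then stops at what you yourself call the crux, leaving open whether the required configuration is realizable in $S5$. But the obstruction you isolate in that final paragraph is not a difficulty to be worked around; pushed one step further, it shows that \emph{no} counter-model exists, i.e., that the proposition is false. In $S5$ the truth value of $\phi = K_1K_2L_1p$ depends only on the $R_1$-class of the evaluation point, so the set of worlds where $\phi$ holds is a union of $R_1$-classes, and announcing $\phi$ deletes whole $R_1$-classes at a time. Now let $u$ be any surviving world. By reflexivity (applied twice) $M,u\models L_1p$, and any witness for this lies in $[u]_1$, all of which survives; since $p$ is propositional, its truth at the witness persists in the submodel. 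Hence every world of $M|_\phi$ satisfies $L_1p$ in $M|_\phi$; consequently every world of $M|_\phi$ satisfies $K_2L_1p$, and therefore also $K_1K_2L_1p$, in $M|_\phi$. This argument makes no assumption whatsoever about the $R_2$-edges, so your fallback of ``routing the failure through a different arrangement of the $R_2$-edges'' cannot succeed either, and it applies verbatim with any propositional $\alpha$ in place of $p$: the formula $K_1K_2L_1\alpha$ is successful.

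This also means the paper's own proof is erroneous, and it fails at exactly the point you flagged. Its counter-model is the six-world chain $w_1R_1w_2R_2w_3R_1w_4R_1w_5R_2w_6$ with $\alpha$ true at $w_1$ and $w_4$. Computing with equivalence closures: $L_1\alpha$ fails only at $w_6$, so $K_2L_1\alpha$ fails exactly at $w_5$ and $w_6$, so $\phi$ holds exactly on the $R_1$-class $\{w_1,w_2\}$ --- matching the paper's right-hand figure. But in the restricted model the $R_2$-edge from $w_2$ to $w_3$ has disappeared, so $[w_2]_2=\{w_2\}$, while $L_1\alpha$ still holds at $w_2$ via $w_1$; hence $K_2L_1\alpha$ holds at both surviving worlds and $M|_\phi,w_1\models\phi$, contrary to the paper's claim that $\phi$ fails there. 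So the honest verdict is: your proposal is incomplete as a proof, but the gap is not yours to fill --- the statement itself is wrong, and your closing analysis contains all the essential ingredients of its refutation.
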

\begin{proof}
It is easy to check that the counter-model in Figure 1 does not work for the formula $K_1 K_2 L_1\alpha$, since it is true at all the worlds in the model and therefore no world is deleted from the model after the announcement of the formula. We extend the counter-model presented earlier so that it makes $K_1 K_2 L_1\alpha$ unsuccessful. 
\begin{figure}
\begin{center}
\tikzstyle{point}=[circle,draw]
\tikzstyle{c}=[circle,draw=red,minimum size=20pt,inner sep=0pt, ultra thick]
\tikzstyle{p}=[circle,draw,minimum size=10pt,inner sep=0pt]
\tikzstyle{ps}=[circle,draw,inner sep=1pt]
\tikzstyle{edge}=[style=-latex]
\begin{tikzpicture}[scale=0.5]
\path[rounded corners, draw=black] (-1.2,-0.3) rectangle (10.2,2.2);
  \node[ps] (n1) at (-0.5,1) [point] {$w_{1}$};
  \node[ps] (n2) at (1.5,1) [point] {$w_{2}$} ;
  \node[ps] (n3) at (3.5,1) [point] {$w_{3}$} ;
  \node[ps] (n4) at (5.5,1) [point] {$w_{4}$} ;
\node[ps] (n5) at (7.5,1) [point] {$w_{5}$} ;  
\node[ps] (n6) at (9.5,1) [point] {$w_{6}$} ;
\node [below] at (n1.south) {$\alpha$};
  \node [below] at (n4.south) {$\alpha$};
  \draw (n1) edge node [above] {$R_{1}$} (n2);
  \draw (n2) edge node [above] {$R_{2}$} (n3);
  \draw (n3) edge  node [above] {$R_{1}$} (n4);
  \draw (n4) edge  node [above] {$R_{1}$} (n5);
  \draw (n5) edge  node [above] {$R_{2}$} (n6);
 \end{tikzpicture}
\hspace{10mm}
\begin{tikzpicture}[scale=0.5]
\path[rounded corners, draw=black] (-1.2,-0.1) rectangle (2.3,2.2);
  \node[ps] (n1) at (-0.5,1) [point] {$w_{1}$};
  \node[ps] (n2) at (1.5,1) [point] {$w_{2}$} ;
  \node [below] at (n1.south) {$\alpha$};
  \draw (n1) edge node [above] {$R_{1}$} (n2);
  \end{tikzpicture}
\caption {Models before and after the announcement of $K_1 K_2 L_1 \alpha$}
\end{center}
\end{figure}
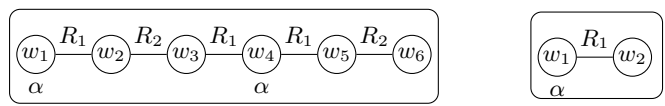
\end{proof}
We give an example of a formula $L_1K_2 K_3L_1\alpha$, beginning with an $L$ operator, which would have been classified as a $KL$-simple single term formula earlier and therefore unsuccessful, but in the compound case, it is successful. This further motivates the need for a separate characterization result for the compound case.
\begin{proposition}
The formula $L_1K_2 K_3L_1\alpha$ is successful.
\end{proposition}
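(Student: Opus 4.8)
The plan is to follow the same ``survival of witnesses'' strategy used for the $LK$-simple case (Proposition~\ref{lksucc}), but now exploiting two features that were absent there: the repeated occurrence of agent $1$ (one $L_1$ on the outside and one on the inside), and the full strength of the $S5$ conditions, in particular symmetry. Writing $\phi := L_1 K_2 K_3 L_1 \alpha$ and assuming $M, w \models \phi$, I would not argue directly about $w$; instead I would prove the stronger statement that $\phi$ holds at \emph{every} world of $M|_\phi$. Since $M, w \models \phi$ gives $w \in M|_\phi$, this is more than enough.

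The heart of the argument is the claim that for every $u \in M|_\phi$ one has $M|_\phi, u \models L_1 \alpha$. To establish it I would fix $u \in M|_\phi$, so $M, u \models L_1 K_2 K_3 L_1 \alpha$, giving some $v$ with $u R_1 v$ and $M, v \models K_2 K_3 L_1 \alpha$. Applying reflexivity of $R_2$ and $R_3$ (the reflexive witnesses $v R_2 v$ and $v R_3 v$) collapses the universal block, so $M, v \models L_1 \alpha$ and there is a $t$ with $v R_1 t$ and $M, t \models \alpha$. By transitivity $u R_1 t$, and $M, t \models \alpha$ is preserved in the submodel because $\alpha$ is propositional, so $t$ is the natural candidate for the inner $L_1$-witness at $u$.

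The hard part, and exactly the step where $KL$-simple formulas such as $K_1 L_2 \alpha$ (Proposition~\ref{klsimple}) break down, is showing that this inner witness $t$ is not deleted, i.e. that $M, t \models \phi$. Here the two features of the formula come into play: since $v R_1 t$ and $R_1$ is symmetric (an equivalence relation in $S5$), I get $t R_1 v$, and because $M, v \models K_2 K_3 L_1 \alpha$ this immediately yields $M, t \models L_1 K_2 K_3 L_1 \alpha = \phi$. Hence $t \in M|_\phi$, so $u R'_1 t$ holds in the restricted model and $M|_\phi, u \models L_1 \alpha$, completing the claim. I expect this ``fold-back'' via symmetry to be the only genuinely new ingredient compared with the reflexivity/transitivity bookkeeping of Propositions~\ref{llsimple} and~\ref{lksucc}.

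Assembling the outer operators is then routine. Since $L_1 \alpha$ holds at \emph{every} world of $M|_\phi$, the universal operators $K_2$ and $K_3$ are satisfied at every world of $M|_\phi$ (their scope is true throughout), so $M|_\phi, u \models K_2 K_3 L_1 \alpha$ for each $u \in M|_\phi$. Applying reflexivity of $R_1$ once more (the witness $u R'_1 u$) gives $M|_\phi, u \models L_1 K_2 K_3 L_1 \alpha$ for every $u \in M|_\phi$, and in particular $M|_\phi, w \models \phi$, which shows $\phi$ is successful.
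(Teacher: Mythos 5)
Your proof is correct, but it takes a genuinely different route from the paper's. The paper argues locally, in the style of Proposition~\ref{klsuccmp}: starting from $M,w_1\models L_1K_2K_3L_1\alpha$ it traces the particular chain $w_1 R_1 w_2 R_2 w_3 R_1 w_4$, shows via reflexivity and transitivity that $w_2$, $w_3$, $w_4$ each satisfy the formula in $M$ and hence survive the announcement, and then concludes $M|_\phi,w_1\models\phi$, leaving the equivalence ``$M,w_3\models\phi$ iff $M,w_4\models\phi$'' to the reader. Notably, the paper's chain never descends through the $K_3$ layer, and it never verifies \emph{inside} $M|_\phi$ that the innermost $L_1\alpha$ still has a surviving witness at every surviving $R_2$- and $R_3$-successor, which is exactly what the semantics of the universal operators demands. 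Your proof replaces this chain-tracking with a global invariant: every world $u$ of $M|_\phi$ satisfies $L_1\alpha$ in $M|_\phi$, because the witness $t$ obtained by collapsing $K_2K_3$ via reflexivity folds back along $R_1$ by symmetry ($tR_1v$ with $M,v\models K_2K_3L_1\alpha$), so $t$ itself satisfies $\phi$ and survives; once $L_1\alpha$ holds throughout $M|_\phi$, the $K_2$ and $K_3$ layers are satisfied trivially and reflexivity of $R_1$ finishes the argument. What your route buys is precisely the rigor the paper leaves implicit: correct handling of the quantification over \emph{all} surviving successors required by $K_2$ and $K_3$, and an explicit identification of symmetry as the feature of $S5$ that keeps inner witnesses alive (the paper's argument invokes only reflexivity and transitivity). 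What the paper's route buys is brevity and uniformity with its treatment of $K_1L_2K_1\alpha$ in Proposition~\ref{klsuccmp}.
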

\begin{proof}
The proof of this proposition is similar to proposition \ref{klsuccmp} and uses the idea that all the worlds make the formula true and are therefore contained in the sub-model.
Suppose $M,w_1\models L_1K_2 K_3L_1\alpha$, $\Rightarrow \exists w_2$, s.t $w_1R_1 w_2, M,w_2\models K_2 K_3L_1\alpha$. But the frame is reflexive, so  $ M,w_2\models L_1K_2 K_3L_1\alpha$, and therefore $w_2\in M|_{L_1K_2 K_3L_1\alpha}$. 

Now, $\forall w_3$ s.t. $w_2 R_2 w_3,  M, w_3\models K_3 L_1\alpha$. We can use the same argument as in the proof of proposition \ref{klsuccmp} to show that $M, w_3\models L_1K_2 K_3L_1\alpha$, so $w_3\in M|_{L_1K_2 K_3L_1\alpha}$. 

Since, $M, w_3\models L_1K_2 K_3L_1\alpha$, $\Rightarrow \exists w_4$, s.t $w_3R_1 w_4, M, w_4\models K_2 K_3L_1\alpha$. We leave it to the reader to check that, $M, w_3\models L_1K_2 K_3L_1\alpha$ iff $M, w_4\models L_1K_2 K_3L_1\alpha$. Once we have shown that, $w_4\in M|_{L_1K_2 K_3L_1\alpha}$ which implies $M|_{L_1K_2 K_3L_1\alpha}, w_1\models L_1K_2 K_3L_1\alpha$, thus proving $L_1K_2 K_3L_1\alpha$ is successful. 

\end{proof}

\section{Characterization of $\mathcal{L}_\mathrm{mterm}$}
In this section we present characterization results for the formulas in $\mathcal{L}_\mathrm{mterm}$ which includes Boolean combinations of simple single term formulas. We know from Section 3.1 that the $KL$-simple single term formulas are unsuccessful formulas. The proposition below generalizes the result to any number of Boolean conjunctions of unsuccessful formulas.

\begin{proposition} 
If $\phi$ and $\psi$ are unsuccessful simple single term formulas, their conjunction $\phi\land\psi$ is also unsuccessful.
\end{proposition}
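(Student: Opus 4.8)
The plan is to reduce the hypothesis to a concrete syntactic shape, exhibit a single counter-model for $\phi\land\psi$, and then isolate the interaction between the two conjuncts as the genuine difficulty.

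First I would use the classification of Section 3.1 to pin down what an \emph{unsuccessful simple single term formula} is. A simple single term formula with at least one $K$ and at least one $L$ is either $KL$-simple (some $L$ in the scope of a $K$) or $LK$-simple (all $L$'s precede all $K$'s); together with the pure-$K$ and pure-$L$ cases these are mutually exclusive and exhaustive. By Proposition~\ref{llsimple} the $K$-simple and $L$-simple formulas are successful and by Proposition~\ref{lksucc} the $LK$-simple formulas are successful, so the unsuccessful simple single term formulas are exactly the $KL$-simple ones of Proposition~\ref{klsimple}. Hence I may assume $\phi = K_a X L_b Y\alpha$ and $\psi = K_c Z L_d W\beta$, each with a contingent kernel (a tautological or contradictory kernel already yields a successful formula), and each admitting the four-world chain counter-model of the proof of Proposition~\ref{klsimple}: worlds $w_1,w_2,w_3,w_4$ with the kernel true at $w_1,w_3$, where $\phi$ holds exactly at $w_1,w_2$, so that announcing $\phi$ deletes the witness $w_3$ and falsifies $\phi$ at $w_1$.

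The core construction, for the case where $\phi$ and $\psi$ share neither agents nor propositional letters, makes $\phi$ the culprit of the failure and renders $\psi$ true wherever it is needed essentially for free. I would take exactly the chain counter-model $M$ of $\phi$ and set the kernel $\beta$ true at every world of the chain; since $\beta$ is contingent this is consistent, and since its letters are disjoint from those of $\alpha$ it does not disturb the truth value of $\phi$ anywhere. Because the chain carries edges only for the agents of $\phi$, every agent of $\psi$ has a singleton equivalence class at each world of $M$, so each operator of $\psi$ collapses and $\psi$ is true at a world iff $\beta$ is; thus $\psi$ is true throughout $M$. Consequently $M,w_1\models\phi\land\psi$, and as $\psi$ holds everywhere, a world survives the announcement of $\phi\land\psi$ iff it survives that of $\phi$; in particular $w_1,w_2$ survive while $w_3$ is deleted (it already falsifies $\phi$, hence $\phi\land\psi$). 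Since $\phi$ and its failure depend only on $\alpha$'s letters, $M|_{\phi\land\psi}$ restricts to the same frame as $M|_{\phi}$ on these worlds, so $\phi$, and with it $\phi\land\psi$, fails at $w_1$, proving $\phi\land\psi$ unsuccessful. The extension to a conjunction of arbitrarily many unsuccessful single term formulas then follows by induction, designating one conjunct as culprit and collapsing the rest.

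The hard part is the interference between the conjuncts, where the ``for free'' argument breaks. If $\psi$ shares an agent with $\phi$, that agent no longer has singleton classes along the $\phi$-chain, so $\psi$ need not collapse to $\beta$ and may fail at the survivor worlds $w_1,w_2$; and if $\psi$ shares a letter with $\phi$, forcing $\beta$ true at $w_2$ can make $w_2$ its own witness for the very $L$ operator whose witness we are deleting, thereby resurrecting $\phi$ after the announcement. Both phenomena are genuine, as one already sees with $K_1 L_2 p \land K_1 L_2 \neg p$ (full agent overlap and a conflicting letter), which is nonetheless unsuccessful via a more careful model. The fix I would pursue is to interleave the two witnessing chains rather than overlay them: keep a single surviving first-$K$ edge $w_1 R_a w_2$, route the two $L$-witnesses off $w_2$ through their respective relations, and, when one witness world must simultaneously serve a conjunct and be deleted, attach a small Moore-style sub-gadget that forces $\phi\land\psi$ to fail at that witness (so it is deleted) while leaving $\phi\land\psi$ true at $w_1,w_2$. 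Showing that such an interleaving can always be arranged---simultaneously keeping $\phi\land\psi$ true at the survivors and deleting the critical witness, for every pattern of shared agents and shared or conflicting letters---is the main obstacle and the step demanding the most care.
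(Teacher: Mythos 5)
Your reduction of the hypothesis to $KL$-simple conjuncts is exactly right, and your core construction for the case of disjoint agents and disjoint propositional letters is correct --- in fact it is sounder than the paper's own. The paper builds its counter-model by gluing the two individual chain counter-models together at the shared actual world, with no adjustment of valuations, and this does not work. Take $\phi = K_1L_2p$ and $\psi = K_3L_4q$ and glue the two four-world chains at $w_1$: since $p$ is false on $\psi$'s chain, $q$ is false on $\phi$'s chain, and every ``foreign'' agent has singleton equivalence classes off its own chain, $\phi\land\psi$ holds \emph{only} at the glued world $w_1$. Announcing it therefore leaves the singleton reflexive model on $w_1$, where both $p$ and $q$ are true and hence $\phi\land\psi$ holds again --- the glued model is not a counter-model at all. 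Your fix --- dispense with the second chain entirely, take $\phi$'s chain and make $\beta$ true at every world, so that $\psi$ collapses to $\beta$ and holds everywhere, and the announcement of $\phi\land\psi$ deletes exactly the worlds the announcement of $\phi$ would --- avoids precisely this over-deletion, and the failure of $\phi$ at $w_1$ transfers verbatim to $\phi\land\psi$. So on the case both proofs actually address, your argument is the correct one.

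The gap you flag is real, but you should know that it is equally a gap in the paper. The paper dismisses interaction with the claim that, being simple single term, ``$\phi$ and $\psi$ share no common epistemic operators''; this misreads the definition, since simplicity constrains each conjunct internally (each agent's operator occurs at most once \emph{per formula}) and says nothing across the conjunction. Thus $K_1L_2p\land K_1L_2\neg p$, your own example, is a legitimate instance of the proposition that neither the paper's proof nor your completed construction covers; your interleaving-with-gadgets strategy for such instances is, as you admit, only a plan. The honest summary is that the unrestricted statement is proved by neither you nor the paper: it is established only under the implicit hypothesis that the conjuncts share no agents and have non-interfering kernels --- and under that hypothesis your proof, unlike the paper's, is actually complete.
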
 
\begin{proof}
We showed in proposition \ref{klsimple} that $KL$-simple single term formulas are unsuccessful. It is easy to see that a conjunction of two simple single term $KL$ formulas will be unsuccessful. The counter-model for the conjunction will be the model which consists of the counter-models for each of the individual unsuccessful formulas sharing the real world as the common world. Since we are in the simple single term case, this counter-model is sufficient since $\phi$ and $\psi$ share no common epistemic operators and therefore the two counter-models corresponding to them will have no interaction. 
\end{proof}
The conjunction of a successful and an unsuccessful formula is unsuccessful as expected.
\begin{proposition}
If $\phi$ is successful and $\psi$ is unsuccessful, their conjunction $\phi\land\psi$ is unsuccessful.
\end{proposition}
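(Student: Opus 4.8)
The plan is to start from a model witnessing that $\psi$ is unsuccessful and to upgrade it into a witness for $\phi\land\psi$. Concretely, unsuccessfulness of $\psi$ supplies a pointed model $(M,w)$ with $M,w\models\psi$ but $M|_\psi,w\not\models\psi$. If $\phi$ happened to be true at every world of $M$, then announcing $\phi\land\psi$ would delete exactly the same worlds as announcing $\psi$, i.e. $M|_{\phi\land\psi}=M|_\psi$. Since $\phi$ holds throughout, $M,w\models\phi\land\psi$, while $M|_{\phi\land\psi},w\not\models\psi$ immediately forces $M|_{\phi\land\psi},w\not\models\phi\land\psi$, completing the argument. So the whole difficulty is reduced to realising the ``$\phi$ true everywhere'' situation on top of the Moorean defect that makes $\psi$ fail.

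First I would try to produce this situation by the same gluing idea used in the preceding proposition: combine the $\psi$-counterexample with a model of $\phi$, identifying the evaluation world $w$ as the shared real world, and arrange that the two gadgets do not interact. One clean way to keep $\phi$ true after the update is to exploit Theorem~\ref{succlass}: since the successful universal fragment is preserved under submodels, global truth of $\phi$ in $M$ is inherited by every submodel, in particular by $M|_{\phi\land\psi}$, so the deletion caused by $\psi$ cannot destroy $\phi$. The delicate point is that forcing $\phi$ at and around $w$ must not resurrect the truth of $\psi$ at $w$ in the updated model, which is why the non-interaction of the two pieces has to be checked.

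The step I expect to be the genuine obstacle --- and in fact the place where the statement as written breaks down --- is guaranteeing that $\phi\land\psi$ is satisfiable at all. Any announcement of an unsatisfiable formula $\chi$ is vacuously successful, since $M,w\models\chi$ fails everywhere and hence $[\chi]\chi$ holds at every world; so if $\phi$ and $\psi$ are jointly inconsistent, their conjunction is successful and the claim fails. This is not a technicality: the successful universal formula $\phi=K_bp$ together with the unsuccessful Moore sentence $\psi=p\land\neg K_bp$ gives $\phi\land\psi\equiv\bot$, which is successful. I would therefore prove the proposition only under the additional hypothesis that $\phi\land\psi$ is satisfiable (equivalently, that $\phi$ does not entail $\neg\psi$), or --- matching the ambient single term setting of this section --- restrict $\phi,\psi$ to single term formulas, where an unsuccessful ($KL$-type) formula, peeled by reflexivity, forces only an existential statement $L\cdots\alpha$ at the evaluation world and so pins down no propositional atom there, ruling out the degenerate inconsistent case. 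Under either restriction the padding argument above goes through; without it, the proposition is false.
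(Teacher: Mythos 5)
Your core construction is the same one the paper uses: start from a counter-model $(M,w)$ witnessing that $\psi$ is unsuccessful, arrange for $\phi$ to hold so that $M,w\models\phi\land\psi$, and observe that if $\phi$ holds at every world then $M|_{\phi\land\psi}=M|_\psi$, whence $M|_{\phi\land\psi},w\not\models\psi$ already refutes the conjunction after its own announcement. You are also right that the proposition as literally stated is false: contradictions are vacuously successful, and $\phi=K_bp$, $\psi=p\land\neg K_bp$ is a correct witness. The paper is aware of this interaction problem --- its proof hedges with ``this is possible as long as the non epistemic parts of $\phi$ and $\psi$ don't depend on one another'' --- but it never lifts that restriction into the statement, so on this point your analysis is sharper than the paper's.

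However, both of your proposed repairs fail. Restricting to single term formulas does not rule out joint inconsistency: take $\phi=K_2p$ (successful, $K$-simple) and $\psi=K_1L_2\neg p$ (unsuccessful, by Proposition \ref{klsimple}). By reflexivity $\psi$ forces $L_2\neg p$ at the evaluation world, which contradicts the universal claim $K_2p$ over the same relation $R_2$; so $\phi\land\psi$ is unsatisfiable in {\bf S5} and hence successful. Your argument that the peeled existential ``pins down no propositional atom'' at $w$ overlooks that an existential over $R_2$ can clash with a universal over $R_2$ contributed by $\phi$. More importantly, adding the satisfiability hypothesis is also insufficient: take $\phi=K_2p$ and $\psi=K_1L_2p$. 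Their conjunction is satisfiable (a single reflexive world with $p$), yet successful: every world surviving the announcement of $K_2p\land K_1L_2p$ satisfies $K_2p$ in $M$, hence $p$ by factivity, so $p$ holds at every world of the updated model; this regenerates $K_2p$ and $L_2p$ everywhere, and therefore $K_1L_2p$, at every surviving world. Here your padding argument cannot be run at all --- $\phi$ cannot be made true throughout any counter-model for $\psi$ without destroying it --- precisely because $\phi$ and $\psi$ share the atom $p$ and the agent $2$. The hypothesis that actually makes the gluing work is the one the paper's proof gestures at (and which your own ``non-interaction'' worry points to): independence of the propositional content, e.g.\ disjoint propositional letters together with satisfiability of $\phi$, under which $\phi$ can be made globally true on $\psi$'s counter-model by a fresh-atom valuation, and your first paragraph then closes the argument.
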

\begin{proof}
In order to show that $\phi\land\psi$ is unsuccessful at a world $w\in W$, we use the counter-model starting at $w$ for proving that $\psi$ is unsuccessful and make $\phi$ true $w$. This is possible as long as the non epistemic parts of $\phi$ and $\psi$ don't depend on one another. 
\end{proof}
The analysis of conjunction of two successful formulas is more complicated and involves a number of cases and their success or failure depends on the non-epistemic parts of the formula. Recalling from Section 3.1, any simple single term successful formula is of the form $\alpha, \mathbf{K}\alpha, \mathbf{L}\alpha, \mathbf{LK}\alpha$, where $\mathbf{K}$ and $\mathbf{L}$ are series of $K$ and $L$ epistemic operators corresponding to different agents having at least 2 epistemic operators \footnote{for formulas with a single L and K, the characterization may differ for some cases, for eg. $\alpha\land L_1\mathbf{K}\beta$ is successful iff $\beta\rightarrow\alpha$, $L_1\alpha\land\L_2\beta$ is successful iff $\alpha\leftrightarrow\beta$ and $\alpha\land L_1\beta$ is successful iff $\alpha\rightarrow\beta$ or $\beta\rightarrow\alpha$}, and $\alpha$ is a propositional formula. 
\begin{proposition} The conjunction of simple single term successful formulas is successful or unsuccessful subject to the following conditions
\begin{enumerate}
\item
The conjunctions $\alpha\land \beta$, $\alpha\land \mathbf{K}\beta$ and $\mathbf{K}\alpha\land \mathbf{K'}\beta$ are successful where, $\alpha$ and $\beta$ are formulas without epistemic operators and $\mathbf{K}$ and $\mathbf{K'}$ are series of $K$ epistemic operators.
\item
The conjunction $\alpha\land\mathbf{L}\beta$ is successful iff $\alpha\rightarrow\beta$. 
\item
The conjunction $\alpha\land\mathbf{LK}\beta$ is unsuccessful. 
\item
The conjunction $\mathbf{K}\alpha\land\mathbf{L}\beta$ is successful iff $\alpha\rightarrow\beta$. 
\item
The conjunctions $\mathbf{K}\alpha\land\mathbf{LK'}\beta$, $\mathbf{L}\alpha\land\mathbf{L'}\beta$, $\mathbf{L}\alpha\land\mathbf{L'K}\beta$ and $\mathbf{LK}\alpha\land\mathbf{L'K'}\beta$ are unsuccessful.
\end{enumerate}
\end{proposition}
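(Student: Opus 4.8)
The plan is to organise the five items around a single structural dichotomy. A conjunct built only from $K$-operators (together with $\land,\lor$) is a universal formula, hence preserved under sub-models by Theorem~\ref{succlass}; a conjunct carrying a leading block of $L$-operators is instead witnessed by an existential chain, and reflexivity lets us either collapse that chain to a self-loop at the evaluation world $w$ (which keeps the formula successful) or, in a counter-model, sever it by deleting one intermediate chain-world (which makes it unsuccessful). I would first dispose of item~(1): $\alpha$, $\mathbf{K}\alpha$, and their conjunctions all lie in $\mathcal{L}_{suc}$, so by Theorem~\ref{succlass} they are preserved under sub-models and therefore successful, and no counter-model can exist.

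For items~(2) and~(4) I would prove the two halves of the ``iff'' separately. For the ``if'' direction, observe that $M,w\models\alpha$ holds directly (item~(2)) or via reflexivity of the $K$-chain (item~(4), since $\mathbf{K}\alpha$ forces $\alpha$ at $w$); assuming $\alpha\to\beta$ then gives $M,w\models\beta$, so the reflexive loop $wR_iw$ already witnesses $\mathbf{L}\beta$ at $w$ using only the surviving world $w$, while the universal conjunct ($\alpha$, resp.\ $\mathbf{K}\alpha$) is preserved under the restriction; hence $M|_\phi,w\models\phi$. For the ``only if'' direction, I would pick a valuation realising $\alpha\land\neg\beta$ (available whenever $\alpha\to\beta$ fails and $\alpha,\beta$ are contingent) and build an $S5$ counter-model in which $w\models\phi$, the $\mathbf{L}\beta$-witnessing chain has length $\ge 2$, and one \emph{intermediate} world of that chain is given a valuation (or, in item~(4), a $K$-successor) falsifying the other conjunct. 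That world is deleted by the announcement; since $w\models\neg\beta$ blocks the self-loop and the chain was the only route to a $\beta$-world, $\mathbf{L}\beta$ fails at $w$ afterwards. This fits the footnote, whose length-one cases obey a different condition precisely because a single $L$ leaves no intermediate world to delete.

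Items~(3) and~(5) I would treat uniformly, since each lists conjunctions in which at least one conjunct has the shape $\mathbf{L}(\cdots)$, an $L$-block possibly followed by a $K$-block. The governing remark is that the self-loop salvation is unavailable here, because no other conjunct of $\phi$ forces, at $w$, the target of the chosen $\mathbf{L}$-block: that target is either a $\mathbf{K}\beta$-formula, which for contingent $\beta$ can be falsified at $w$ by attaching a fresh $K$-successor that falsifies $\beta$ while leaving the universal part intact (so $\alpha\to\mathbf{K}\beta$ in item~(3), and $\mathbf{K}\alpha\to\mathbf{K'}\beta$ for $\mathbf{K}\alpha\land\mathbf{LK'}\beta$, are never valid), or a propositional formula that the other, $L$-type, conjunct does not entail at $w$. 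Consequently I can always realise $w\models\phi$ while making $w$ fail the target of its $\mathbf{L}$-block, and then sever that $L$-chain by deleting an intermediate world that fails the \emph{other} conjunct. Because the formulas are simple single term, the agents indexing the two conjuncts are disjoint, so the local failure I engineer at the intermediate world neither propagates to $w$ nor is repaired by symmetry or transitivity elsewhere.

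The main obstacle I anticipate is not this logical skeleton but the bookkeeping of realising every counter-model as a genuine $S5$ model, i.e.\ with each $R_i$ a bona fide equivalence relation: I must check that forming reflexive--symmetric--transitive closures does not create an alternative $L$-chain that survives the announcement and rescues the severed conjunct. The disjointness of agents across operator positions, guaranteed by simplicity, is exactly what prevents these closures from interfering, so the counter-models will be minor elaborations of the templates in Figures~1--3; the remaining effort is to verify, case by case, that the intended world is the unique break-point of the unique surviving witness.
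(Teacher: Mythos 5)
Your overall architecture coincides with the paper's: item~(1) via Theorem~\ref{succlass}, a reflexive self-loop rescue for the ``if'' halves of items~(2) and~(4), and chain-severing counter-models for everything else. For items~(1), (2) and~(4) your execution is actually more careful than the paper's: you record the contingency caveat, and you break the chain by deleting an \emph{intermediate} world, whereas the paper's counter-model for item~(2) demands $\neg\alpha\land\beta$ at the final world $w_{n+1}$, which need not be satisfiable when $\alpha\not\rightarrow\beta$ (take $\alpha=p$, $\beta=p\land q$); your version repairs this.

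The genuine gap is in your uniform treatment of item~(3) and of $\mathbf{K}\alpha\land\mathbf{LK'}\beta$ in item~(5). You falsify the target $\mathbf{K'}\beta$ of the $\mathbf{L}$-block at $w$ by attaching a fresh $\neg\beta$-successor, and conclude that once the $L$-chain is severed the self-loop cannot rescue the formula. But $\mathbf{K'}\beta$ is a universal formula and is therefore \emph{preserved} under sub-models; its \emph{falsity} is not. Your fresh successor must itself survive the announcement, otherwise $\mathbf{K'}\beta$ can become true at $w$ in $M|_\phi$, and the reflexive loops then rescue $\phi$. This is not repairable bookkeeping: when $\alpha\rightarrow\beta$ is valid (e.g.\ $\alpha=\beta=p$), every world of $M|_\phi$ satisfies $\alpha$ (in the item~(5) case it satisfies $\mathbf{K}\alpha$, hence $\alpha$ by reflexivity), hence $\beta$; so $\mathbf{K'}\beta$ holds at \emph{every} world of $M|_\phi$, reflexivity gives $\mathbf{LK'}\beta$ everywhere, and $\phi$ holds at $w$ after its own announcement. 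Thus $p\land L_1L_2K_3p$ and $K_1K_2p\land L_3L_4K_5K_6p$ are successful, so items~(3) and the first conjunction of~(5) are false as stated unless one adds the side condition $\alpha\not\rightarrow\beta$. Your parenthetical remark that $\alpha\rightarrow\mathbf{K}\beta$ is ``never valid'' is true but beside the point: what matters is truth throughout the restricted model, and the announcement itself can force it. (The paper's own proof of item~(3) --- ``irrespective of $\alpha\leftrightarrow\beta$'' --- commits exactly the same error, and its sketched counter-model likewise collapses when $\alpha\rightarrow\beta$; but that does not close the gap in your argument, which, as written, establishes items~(3) and~(5) only under the additional hypothesis that the announced conjunction does not entail $\beta$ at the evaluation world, i.e.\ $\alpha\not\rightarrow\beta$.)
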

\begin{proof}
The proof for 1 is trivial since we know from Theorem \ref{succlass} that universal formulas are preserved under sub-models. 

For 2, if we assume $\alpha\rightarrow\beta$, then proving $\alpha\land \mathbf{L}\beta$ is successful is easy. To see why assume $M, w \models \alpha\land \mathbf{L}\beta$. Since, $\alpha$ is a propositional formula it will be preserved in every sub-model  $M'\subseteq M$  and therefore in particular, $M|_{\alpha\land \mathbf{L}\beta},w\models \alpha$. But, $\alpha\rightarrow \beta$ and the frame is reflexive so we have $M|_{\alpha\land \mathbf{L}\beta},w\models \alpha\land \mathbf{L}\beta$. The converse direction can be proved using a contrapositive argument. Suppose $\alpha\nrightarrow\beta$, we can construct a counter-model to show $\alpha\land \mathbf{L}\beta$, where $\mathbf{L}=L_1 L_2\ldots L_n$, is unsuccessful in the following way.  Let $w_1R_1w_2R_2\ldots w_n R_n w_{n+1}$ be a set of related worlds. We make $\alpha$ true at only $w_1$ and false at all other worlds and $\beta$ true only at $w_{n+1}$. After the announcement of $\alpha\land \mathbf{L}\beta$, only the world $w_1$ will remain in the sub-model making the formula $\alpha\land\, \mathbf{L}\beta$ false at $w_1$.

For 3, we can construct a counter-model in the same way as above. Assume, $M_1\models \alpha \land \,\mathbf{L}\mathbf{K}\beta$. We have $\mathbf{LK}=L_1\ldots L_m K_{m+1}\ldots K_n$.  So, there exists a chain of related worlds, $w_1R_1 w_2 R_2\ldots w_n R_n w_{n+1}$ such that $M, w_{n+1}\models K_{m+1}\ldots K_n\alpha$. In order to have a counter-model, we make $\alpha$ true only at $w_1$ and false at all other worlds and $\beta$ false at a $w'$ related to $w$. One can check that irrespective of $\alpha\leftrightarrow\beta$, $ \alpha \land \,\mathbf{L}\mathbf{K}\beta$ is unsuccessful.   

We leave the proof of 4, which is similar to 2, and of 5 which is similar to 2 to the reader.
\end{proof}

\section{Other results}
The following section is a mixed bag of auxiliary results relating to successful
formulas which might come in handy for further analysis of different classes.
The following theorem relates to the class of \emph{successful} and
\emph{super-successful} formulas.
\begin{theorem}
The following class of S5- PAL formulas are truth-preserved under super-models:
\[ \phi := p\;|\;\neg p\;|\;\phi \land \psi \;|\;\phi \lor \psi \;|\;L_a\phi \;|\;\neg [
\phi ] \neg \psi\]
\end{theorem}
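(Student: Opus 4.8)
The plan is to prove the statement by structural induction on the formula $\chi$ generated by the grammar, establishing the following preservation property: whenever $M$ is a submodel of $N$ (written $M\subseteq N$, so that $N$ is a super-model of $M$, with $W_M\subseteq W_N$, $R_a^M = R_a^N\cap(W_M\times W_M)$ and $V_M(p)=V_N(p)\cap W_M$) and $w\in W_M$, then $M,w\models\chi$ implies $N,w\models\chi$. The base and Boolean cases are routine. For a literal $p$, truth is read off the valuation, which agrees on the shared worlds, so truth transfers upward; for $\neg p$, since $V_M(p)=V_N(p)\cap W_M$ and $w\in W_M$, falsity of $p$ at $w$ in $M$ forces $w\notin V_N(p)$. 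The cases $\phi\land\psi$ and $\phi\lor\psi$ follow immediately from the induction hypotheses applied to the components.

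The modal case $L_a\phi$ is the first place the \emph{existential} shape of the grammar matters: if $M,w\models L_a\phi$ then there is a witness $t\in W_M$ with $w R_a^M t$ and $M,t\models\phi$. Because $R_a^M$ is the restriction of $R_a^N$, we still have $w R_a^N t$, and the induction hypothesis gives $N,t\models\phi$, so the witness survives in the larger model and $N,w\models L_a\phi$. This is exactly dual to the universal (submodel-preserved) fragment of Theorem~\ref{succlass}, and together with the literals and Booleans it accounts for the static existential fragment being preserved under super-models.

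The interesting, and main, case is the dynamic diamond $\neg[\phi]\neg\psi$, which unfolds to $M,w\models\phi$ together with $M|_\phi,w\models\psi$. The first conjunct transfers upward directly by the induction hypothesis on $\phi$, giving $N,w\models\phi$. For the second conjunct I would first show that relativization respects the submodel ordering, that is $M|_\phi\subseteq N|_\phi$: any $v\in W_{M|_\phi}$ lies in $W_M$ and satisfies $M,v\models\phi$, so by the induction hypothesis on $\phi$ it lies in $W_N$ and satisfies $N,v\models\phi$, hence $v\in W_{N|_\phi}$; one then checks that the restricted relations and valuation of $M|_\phi$ are precisely those of $N|_\phi$ cut down to $W_{M|_\phi}$. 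Since $w\in W_{M|_\phi}$, the induction hypothesis on $\psi$ applied to the pair $M|_\phi\subseteq N|_\phi$ yields $N|_\phi,w\models\psi$, and therefore $N,w\models\neg[\phi]\neg\psi$.

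The subtle point, and the step I expect to be the main obstacle, is precisely this last case, because announcement \emph{shrinks} a model while super-model preservation pushes truth into \emph{larger} models, so the two operations pull in opposite directions. What makes the argument go through is that upward preservation of the announced formula $\phi$ is exactly what guarantees $M|_\phi\subseteq N|_\phi$: worlds surviving the announcement in the smaller model also survive it in the larger one, keeping the two relativized models ordered so that the induction hypothesis on $\psi$ can be invoked. Note that $N|_\phi$ may contain additional worlds, namely old worlds of $M$ that fail $\phi$ in $M$ but satisfy it in $N$, as well as genuinely new worlds of $N$; but this only makes $M|_\phi$ a possibly proper submodel of $N|_\phi$, which is harmless for the induction.
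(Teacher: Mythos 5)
Your proposal is correct and follows essentially the same route as the paper's own proof: structural induction, with the key step in the dynamic case being that upward preservation of $\phi$ yields $M|_\phi \subseteq N|_\phi$, after which the induction hypothesis for $\psi$ applies to that pair of relativized models. Your write-up is in fact slightly more careful than the paper's (you treat the $\neg p$ literal explicitly and verify that the restricted relations and valuation match up), but the decomposition and the central lemma are identical.
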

\begin{proof}
We prove the above result by induction on the complexity of the formula $\phi$. Consider the case, $\phi = p $ and assume $M,w\models \phi$ . We know that the truth of a propositional formula is local, that is, depends only on the current state, so any super-model of $M$ will contain $w$ and hence the statement is true for any propositional formula.\\
If $M,w \models \phi \land \psi$ then $M,w \models \phi$ and $M,w \models
\psi$. By induction hypothesis, both $\phi$ and $\psi$ are true in any super-model $M'$ of $M$.
Hence, $M^{'},w \models \phi \mbox{ and } M^{'},w \models \psi$ which shows $M^{'},w \models
\phi \land \psi$. The proof is similar as above for $\phi \lor \psi$.\\
If $M,w \models L_a\phi$, then $M,w' \models \phi$ where $w R_a w'$. Thus
$M',w' \models \phi$ for a super-model $M'$ of $M$ by
induction hypothesis which shows $M^{'},w \models L_a\phi$.\\
If $M,w \models \neg[\phi] \neg \psi$, then $M,w \models \phi$ and $M|_\phi,w
\not\models \neg \psi$ holds, i.e. $M|_\phi,w \models
\psi$.
Now, by induction hypothesis, $M',w \models \phi$.  Consider, $M|_\phi$, $M^{'}|_\phi$, and a world $s \in M|_\phi$. Assume $M,s \models \phi$,  which gives us $M',s \models \phi$ by induction hypothesis, and therefore $s \in M'|_\phi$. This shows that $M|_\phi \subseteq M'|_\phi$, that is,  $M'|_\phi$ is a super-model of $M|_\phi$. Hence, by induction hypothesis, $M'|_\phi,w \models \psi$ which finally proves $M',w \models \neg [ \phi ] \neg \psi.$
\end{proof}
The above result implies that if any successful formula belongs to this class, it must be
super-successful, as $M,w \models \phi \Rightarrow M|_\phi,w \models \phi$ and by the
above formula any $M'$ such that $ M|_\phi \subseteq M'$, and $ M',w \models\phi$.

\noindent We have seen that the class of \emph{self-refuting formulas} is
another class of formulas other than \emph{successful formulas} which are
interesting. 
\begin{definition}[\textbf{Self-refuting formulas}]
A formula is self-refuting iff $[\phi]\neg\phi$ is valid.
\end{definition}
The following theorem links the two classes of formulas.
\begin{theorem}
A formula in S5-PAL is a contradiction iff it is both successful and
self-refuting.
\end{theorem}
\begin{proof}
It trivially follows from the definition, that a contradiction is both successful and self refuting. For the converse, suppose $\phi$
is both successful and self-refuting. Then $[\phi]\phi$ and $[\phi]\neg\phi$
are valid. Suppose for a given pointed model $(M,w)$, if we have $M,w \models \phi$ then
$M|_\phi,w \models \phi$ and $M|_\phi,w \models \phi$, which is a contradiction to our initial assumption. Therefore, $M,w \nvDash\phi$ which shows $\phi$ is a contradiction.
\end{proof}

In \cite{holliday}, it has been shown that successful formulas are not closed under
disjunction for the single agent case. We have a result along similar lines for the closure under $L$ operator.
\begin{theorem}
The class of successful formulas is not preserved under L operator in the multi-agent case.
\end{theorem}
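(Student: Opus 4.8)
To establish the statement it suffices to exhibit a single successful formula $\phi$ for which $L_a\phi$ is \emph{unsuccessful}; the plan is to produce such a $\phi$ together with an explicit counter-model, in the same spirit as the proof that successful formulas fail to be closed under disjunction. The first step is to pin down exactly where the slack between $\phi$ and $L_a\phi$ lives. Since the frame is reflexive, $\phi \rightarrow L_a\phi$ is valid, so every world retained by announcing $\phi$ is also retained by announcing $L_a\phi$; that is, $M|_{\phi}$ is a sub-model of $M|_{L_a\phi}$, and in general this inclusion is strict. Announcing $L_a\phi$ therefore performs \emph{fewer} deletions than announcing $\phi$, and it is precisely the worlds that survive the weaker announcement but not the stronger one that I intend to exploit. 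A second, delimiting observation tells us where \emph{not} to look: within $\mathcal{L}_\mathrm{sterm}$ the operator $L$ cannot break success, because $K$-subformulas are preserved under arbitrary sub-models and a top-level $L$-witness satisfies its own diamond by reflexivity (this is the content of Proposition~\ref{lksucc} and the surrounding analysis). Hence any witnessing $\phi$ must fall outside the single-term fragment and must contain a genuine ``box-over-diamond'' pattern, i.e.\ a subformula of the form $K_c L_d\beta$ occurring non-trivially.

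The construction itself builds on the counter-model of Figure~1 for $K_1L_2p$, which already isolates the mechanism by which a $K_cL_d$-pattern loses the witness of its inner diamond once worlds are deleted. The target is a formula $\phi$ whose $K_cL_d p$ core would ordinarily make it unsuccessful, but which is \emph{guarded} (by a disjunct or an additional conjunct) so that announcing $\phi$ itself deletes exactly those worlds that could falsify the inner $L_d$-witness; such a $\phi$ is then successful for a \emph{deletion-based} reason rather than the \emph{forcing-based} reason behind $K_1L_2K_1p$ in Proposition~\ref{klsuccmp}. I would then prepend an $a$-edge $w_0\,R_a\,w_1$ to the model and insert a world which satisfies $L_a\phi$ but not $\phi$; being retained by the announcement of $L_a\phi$, it can sit as a $c$-successor of the relevant point and destroy the inner $L_d$-witness, so that $\phi$ becomes false at $w_1$ in $M|_{L_a\phi}$. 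As $w_1$ was the only $a$-successor of $w_0$ carrying $\phi$, this yields $M,w_0\models L_a\phi$ while $M|_{L_a\phi},w_0\models\neg L_a\phi$.

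The verification then splits into two parts: checking on the explicit model that $L_a\phi$ is falsified after its own announcement, which is a routine diagram-chase analogous to the $K_1L_2p$ case; and proving that $\phi$ is successful in \emph{every} model, which is the real difficulty. The main obstacle is that success of $\phi$ is a universally quantified claim, and the guard that secures it does so only because announcing $\phi$ removes certain worlds --- yet these are exactly the worlds that the weaker announcement $L_a\phi$ preserves. Thus the same worlds must be shown harmless for the stronger announcement and harmful for the weaker one, and one has to rule out that the forcing phenomenon of Proposition~\ref{klsuccmp} (which would make $\phi$ robustly closed under $L$) sneaks back in. Managing this tension --- a successful $\phi$ whose success is genuinely deletion-based rather than forcing-based --- is where the argument must be most careful.
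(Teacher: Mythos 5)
Your overall strategy --- exhibit a successful $\phi$ for which $L_a\phi$ is unsuccessful, via an explicit counter-model --- is exactly the paper's, and your structural observation that reflexivity gives $M|_{\phi}\subseteq M|_{L_a\phi}$ is correct. But the proposal has a genuine gap on two counts. First, it is not a proof: no concrete $\phi$ is ever written down, no counter-model is constructed, and you explicitly defer the hardest step (showing that your guarded $\phi$ is successful in \emph{every} model). A plan whose crucial verification is acknowledged as open does not establish the theorem.

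Second, your delimiting observation is false and steers the construction away from the simplest witness. You claim that $L$ cannot break success inside $\mathcal{L}_\mathrm{sterm}$, citing Proposition~\ref{lksucc}; but that proposition covers only \emph{simple} single term formulas, in which each agent's operator occurs at most once. In the \emph{compound} fragment the conclusion fails, and this is precisely what the paper exploits: its witness is $\phi = L_1 K_a K_b L_1 p$, a compound single term formula proved successful (see the proposition showing $L_1K_2K_3L_1\alpha$ is successful), while $L_2\phi = L_2L_1K_aK_bL_1 p$ is unsuccessful, as shown by the explicit counter-model in the appendix. Both formulas lie inside the single-term fragment, no Boolean guards are needed, and the ``success half'' of the argument comes for free from the results of Section 3.2. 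Your proposed route --- a Boolean-guarded formula whose success is deletion-based --- would require establishing both halves from scratch, which is exactly the work the proposal leaves undone.
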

\begin{proof}
 $L_1K_aK_bL_1 p$ is successful, while $L_2L_1K_aK_bL_1
p$ is not, the proof of which is given by the counter-model in the appendix.
\end{proof}
\section{Discussion for the General Case and Conclusion}
One can see that the work we have presented in this paper opens up new directions to be explored. We list a few questions answering which, may help to give a complete characterization. We have seen in Section 3.2 that the characterization for the compound single term formulas is quite involved and does not follow as a generalization of the simple single term formulas. We have some preliminary results regarding their characterization which we have not presented in this paper. The idea is to have additional conditions on $KL$ and $LK$ simple single term formulas which allows us to have their complete characterization. We don't have any results on the compound multiple term formulas involving boolean connectives, which would be interesting to look into.

We have seen that the multiple agent scenario is complicated even for single
terms as opposed to single agent case, where single terms are always successful.
Recursively combining the single terms using conjunction or disjunction and then binding the
whole formula within an epistemic operator may result in formulas of increasing
complexity. The way out may be finding a ``normal form" in which the formula can be expressed in an equivalent conjunctive normal form (c.n.f.) or disjunctive normal form (d.n.f.). Alternatively, as a weaker attempt, we may be able to find a class of formulas which in spite of not being logically equivalent, can only be successful iff the original formula is successful. We believe that such a reduction algorithm would be of great help in avoiding the complex cases arising out of Boolean combinations of formulas.

\noindent In a nutshell, a possible way of approaching the task of syntactic
characterization could be:
\begin{enumerate}
\item
Finding a normal form of the formulas which preferably are in c.n.f or d.n.f of
single-term formulas
\item
Propose a way to classify the formulas thus obtained from 1.
\end{enumerate}
Our classification above proceeds in direction of achieving 2. Combining the
ideas and results, and those in \cite{holliday} for single-agent
classification, we might be able to achieve 2. But whether 1 holds or not is
something which is unknown to us at this stage and may be very important with respect to the
difficulty of solving the problem of characterizing successful formulas in PAL.
\bibliographystyle{abbrv}
\bibliography{esslli}
\section{Appendix}
In the counter-model below, $p$ is true only the worlds $x$, $y$ and $z$ and
false in rest of the worlds. Clearly $M, w \models L_2 L_1 K_a K_b L_1 p$. Also,
$ L_2 L_1 K_a K_b L_1 p$ is true in the worlds $s, t, u, y, z$ and $v$ and false in the world $x, x'$ and $v$ . Note
that if $v$ was combined to $s$ by a $1$-edge, then both $v$ and $x$ would have satisfied $L_1 K_a K_b L_1 p$. Hence, this construction cannot be used as a counterexample of $L_1 K_a K_b L_1 p$ (which in fact is successful). Thus, the restricted model has only $v$ and not $x, x'$ and $v'$ . So, in the restricted model $M |_{\phi}, t \models K_a K_b p$. Thus, $M|_\phi , w \models L_2 L_1 K_a K_b L_1 p$. Thus $\phi = L_2 L_1 K_a K_b L_1 p$ is not successful.

\begin{figure}\label{appctrmodel}
\begin {center}
\includegraphics[scale=0.75]{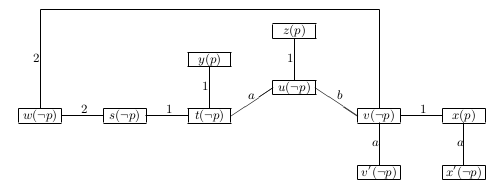}
\end{center}
\caption{Counter-model for $L_2 L_1 K_a K_b L_1 p$}
\end{figure}

\end{document}